\documentclass[11pt]{amsart}
\usepackage{amsthm}

\usepackage[all]{xy}
\usepackage{amssymb}
\usepackage{enumerate}
\usepackage{mathrsfs}
\usepackage{hyperref}
\usepackage{epsfig}
\usepackage{graphicx}
\usepackage{subfig}
\usepackage{float}
\usepackage{epigraph}


\evensidemargin0cm \oddsidemargin0cm
 \textwidth15.8cm

\numberwithin{equation}{section}

\newtheorem{thm}{Theorem}[section]

\newtheorem{lem}[thm]{Lemma}

\newtheorem{rem}{Remark}[section]
\newtheorem{example}[thm]{Example}
\newtheorem{defin}[thm]{Definition}

\renewcommand{\Re}{\operatorname{\rm Re}}
\renewcommand{\Im}{\operatorname{\rm Im}}

\newcommand{\beqast}{\begin{eqnarray*}}
\newcommand{\eqast}{\end{eqnarray*}}
\newcommand{\beqa}{\begin{eqnarray}}
\newcommand{\eqa}{\end{eqnarray}}

\newcommand{\bbe}{\begin{equation}}
\newcommand{\ee}{\end{equation}}

\newcommand{\sbr}{\smallbreak}

\newcommand{\bfo}{{\bf 1}}

\newcommand{\bE}{{\mathbb E}}
\newcommand{\bP}{{\mathbb P}}
\newcommand{\bQ}{{\mathbb Q}}

\newcommand{\bR}{{\mathbb R}}
\newcommand{\bC}{{\mathbb C}}
\newcommand{\bZ}{{\mathbb Z}}

\newcommand{\cD}{{\mathcal D}}

\newcommand{\cL}{{\mathcal L}}

\newcommand{\cR}{{\mathcal R}}

\newcommand{\cC}{{\mathcal C}}

\newcommand{\eps}{\epsilon}
\newcommand{\de}{\delta}
\newcommand{\al}{\alpha}
\newcommand{\be}{\beta}

  \newcommand{\sg}{\sigma}

\newcommand{\De}{\Delta}
\newcommand{\la}{\lambda}
\newcommand{\lp}{\lambda_+}
\newcommand{\lm}{\lambda_-}
\newcommand{\La}{\Lambda}
\newcommand{\mum}{\mu_-}
\newcommand{\mup}{\mu_+}

\newcommand{\ka}{\kappa}

\newcommand{\om}{\omega}

\newcommand{\ze}{\zeta}

\newcommand{\ga}{\gamma}
\newcommand{\gap}{\gamma^+}
\newcommand{\gam}{\gamma^-}
\newcommand{\Ga}{\Gamma}

\begin{document}

\title
[SINH-acceleration]
{SINH-acceleration: efficient evaluation of probability distributions,
option pricing,  and Monte-Carlo simulations}

\author[
Svetlana Boyarchenko and
Sergei Levendorski\u{i}]
{
Svetlana Boyarchenko and
Sergei Levendorski\u{i}}

\thanks{
\emph{S.B.:} Department of Economics, The
University of Texas at Austin, 2225 Speedway Stop C3100, Austin,
TX 78712--0301, {\tt sboyarch@eco.utexas.edu} \\
\emph{S.L.:}
Calico Science Consulting. Austin, TX.
 Email address: {\tt
levendorskii@gmail.com}}

\begin{abstract}
Characteristic functions of several popular classes of distributions and processes
admit analytic continuation
into unions of strips and open coni around $\bR\subset \bC$.
The Fourier transform techniques reduces calculation of probability distributions and option prices to evaluation
of integrals whose integrands are analytic in domains enjoying these properties.
In the paper, we suggest to use
changes of variables of the form $\xi=\sqrt{-1}\om_1+b\sinh (\sqrt{-1}\om+y)$ and the simplified
trapezoid rule to evaluate the integrals accurately and fast. We formulate the general scheme, and
 apply the scheme for calculation
probability distributions and pricing European options in L\'evy models, the Heston model,
the CIR model, and a L\'evy model with the CIR-subordinator. We outline applications to fast and accurate calibration procedures
and Monte Carlo simulations in L\'evy models, regime switching L\'evy models that can account for stochastic drift, volatility and skewness,
and the Heston model. For calculation of quantiles in the tails using
the Newton or bisection method,  it suffices to precalculate several hundred of values of 
the characteristic exponent
at points of an appropriate grid ({\em conformal principal components}) and use these values in formulas for cpdf and pdf.
\end{abstract}
\maketitle

\vskip1cm\noindent
{\em Key words:} sinh-regular L\'evy processes, sinh-regular distributions, sinh-acceleration, conformal principal components, Heston model, KoBoL, CGMY, CIR, CIR subordinator,
Monte-Carlo simulations

\section{Introduction}\label{intro}
In the paper, we formulate general conditions on
integrals arising in the Laplace and Fourier inversion,
the Wiener-Hopf factorization, calculation of probability distributions,  and pricing options and other derivative securities,
which make it possible to evaluate these integrals very accurately and fast. 
In applications to finance, these conditions are conditions on the characteristic functions of contingent claims. In the case of
European options in one-factor L\'evy models and some  popular affine models, the characteristic functions are functions defined on wide  
regions in the complex plane $\bC$; in the case of basket options, barrier and lookback options, the characteristic functions are defined on wide subsets of
$\bC^n$, where $n\ge 2$. An evaluation of the Wiener-Hopf factors, which appear in pricing formulas for barrier and lookback options, 
also involves functions on subsets of $\bC^n$, where $n\ge 2$. In the present paper, we consider several situations, where
one-dimensional integrals and functions on subsets of $\bC$ appear, and leave applications of the same techniques to multidimensional cases
to the future.

We start with the explanation of the main idea
of the suggested methodology in
 the case of one-dimensional integrals of the form
\begin{equation}\label{fxi}
I=\int_{\Im\xi=\om_0}e^{-ix\xi}g(\xi)d\xi,
\end{equation}
where $i=\sqrt{-1}$, $x\in \bR$, the line of integration $\{\Im\xi=\om_0\}$ is in the domain of analyticity of $g$, and $g(\xi)$
decays sufficiently fast as $\xi\to\infty$ remaining in the strip sandwiched between the lines $\bR$ and $\{\Im\xi=\om_0\}$. In probability, the simplest
integrals of this kind appear when  the characteristic function $g(\xi)=\bE[e^{i\xi Y}]$
of a random variable $Y$ is  well-defined not only on $\bR$ but on the line
$\{\Im\xi=\om_0\}$ as well. Then the RHS of \eqref{fxi}
 is the probability distribution function (pdf) $p_Y(x)$ of $Y$ evaluated at $x$ (and multiplied by $2\pi$).

\sbr
The first step of our methodology is the following change  the variable  in \eqref{fxi}:
\begin{equation}\label{sinhbasic}
\xi=\chi_{\om_1, \om; b}(y)=i\om_1+b\sinh (i\om+y),
\end{equation}
where $\om_0,\om_1\in \bR, \om\in [-\pi/2,\pi/2]$ and $b>0$ are related as follows: $\om_0=\om_1+b\sin (\om)$. We call
 the change of variables \eqref{sinhbasic}  the {\em sinh-acceleration} (in the case of multiple integrals, we make an appropriate
sinh-accelerations w.r.t. to  each argument). In the $y$- coordinate, we integrate over the real line.
The change of variables can be justified if the integrand $f(y)=e^{-ix \chi_{\om_1,\om; b}(y)}g(\chi_{\om_1,\om; b}(y))\chi_{\om_1,\om; b}'(y)$ admits analytic continuation
to a sufficiently wide strip $S_{(d_-,d_+)}=\{y\in \bC\ |\ \Im y\in (d_-,d_+)\}$ and decays sufficiently fast as $y\to\infty$ remaining
in the strip. In more detail, the Cauchy integral theorem allows us to deform the line of integration
$\{\Im\xi=\om_0\}$ into the contour $\cL_{\om_1,\om; b}:=\chi_{\om_1, \om; b}(\bR)$. In the integral over
$\cL_{\om_1, \om; b}$, we make the change of variables \eqref{sinhbasic}.
Thus, the sinh-acceleration is possible iff
the integrand  admits analytic continuation to the union of a strip $S$ around the line of integration
$\{\Im\xi=\om_0\}$ and an appropriate conus, and vanishes as $\xi\to\infty$ remaining in the conus. In some
important cases, analytic continuation to a wider region in an appropriate Riemann surface is possible, and then
the speed of the method improves.

We failed to invent a short name for a class of functions enjoying these properties: whereas the name
``functions analytic in a strip (and decaying at infinity)" is not exceedingly clumsy, the name ``functions analytic in a union of a strip and conus"
does seem clumsy. We suggest the name {\em sinh-regular functions}. We use the same adjective {\em sinh-regular} for distributions and processes
that lead to integrals of sinh-regular functions. For the same process, in different problems,
the sinh-acceleration with different sets of parameters $\om_1,\om, b$
needs to be used, therefore, we will formulate general conditions on the characteristic function of the process in terms of the strip
(in multi-factor models, tube domain) and conus
of analyticity, and list several wide classes of sinh-regular processes and distributions.

\sbr
The second step of our methodology is quite standard: the discretization of the integral using the infinite trapezoid rule.
If the integrand is analytic in a strip $S_{(\om_0-d,\om_0+d)}$ around the line of integration and decays sufficiently fast
as $\xi\to\infty$ remaining in the strip, the discretization error of the simplified trapezoid rule
decays exponentially as a function of the reciprocal to the mesh size $\ze$.
 Hence,
 a small error tolerance can be satisfied quite easily. Next, the infinite sum must be truncated; the resulting formula
 is called the simplified trapezoid rule:
 \[
 I=\ze \sum_{|j|\le N} f(j\ze).
 \]
As it is common in the literature,
 one can apply the simplifying trapezoid rule to the initial integral. However, in many cases of interest,
$g(\xi)$ decays slowly as $\xi\to\infty$ remaining in the strip, hence,
the truncation error decays slowly as the number of terms of the simplified trapezoid rule increases. For instance,
even a moderately accurate evaluation of the probability
distribution of a L\'evy process may require dozens of million of terms and more. The sinh-acceleration
exponentially increases the rate of decay of the integrand, and the number $N$ of terms sufficient to satisfy a given error significantly decreases.
In many cases, $N<10$ suffice to satisfy the error tolerance $\eps=10^{-7}$; typically, less than 50 terms suffice,
and in essentially all cases of interest, $N$ of the order of 100-150
suffices to satisfy the error tolerance $10^{-12}$.
\sbr

 A similar trick with the fractional-parabolic changes of the variables of the form
\begin{equation}\label{fractparabasic}
\xi=\chi^\pm_{\om;\sg; \al}(\eta)= i\om\pm i\sg(1\mp i\eta)^\al,
\end{equation}
where $\om\in\bR, \sg>0, \al>1$, was systematically used in a series of papers
\cite{iFT,paraHeston,paraLaplace,paired,MarcoDiscBarr,one-sidedCDS,pitfalls,BarrStIR,UltraFast,HestonCalibMarcoMe,HestonCalibMarcoMeRisk}.
In the working paper \cite{Sinh}, it was suggested to use the sinh-acceleration $\eta=\sinh(ay)/a$ with  integration over the real line after the fractional-parabolic change of variables
has been made. In the present paper, we use the sinh-acceleration only, in the more general form
\eqref{sinhbasic}. Depending on the sign of $\om$, the new contour of integration $\cL_{\om_1, \om, b}$ is deformed either upward
or downward, and the deformed contours enjoy properties similar to the properties of the contours in
\cite{iFT,paraHeston,paraLaplace,paired,MarcoDiscBarr,one-sidedCDS,pitfalls,BarrStIR,UltraFast,HestonCalibMarcoMe,HestonCalibMarcoMeRisk,Sinh}.
The number of terms of the simplified trapezoid rule
is approximately the same as in \cite{Sinh} (and smaller than in
\cite{iFT,paraHeston,paraLaplace,paired,MarcoDiscBarr,one-sidedCDS,pitfalls,BarrStIR,UltraFast,HestonCalibMarcoMe,HestonCalibMarcoMeRisk}, where thousands of terms
are needed in some cases)
but the number of elementary operations needed to calculate individual terms decreases. The general scheme is simpler than the one in \cite{Sinh}.
\sbr
The rest of the paper is organized as follows. In Section \ref{sRLPE}, we explain the sinh-acceleration techniques
in applications to evaluation of probability distribution functions of wide classes of L\'evy processes and infinitely
divisible distributions, which we call sinh-regular processes and distributions. The class contains almost
all popular classes of L\'evy processes used in finance, conditional probability distributions in the Heston model, more general stochastic volatility models,
affine and quadratic interest rate models, models with Wishart dynamics, Barndorf-Nielsen and Shephard model, 3/2 model,\ldots. As a basic numerical example, we
consider the probability distributions in
the NTS model \cite{B-N-L}.
In Section \ref{simpleII}, \ref{CIRbond} and \ref{subord}, we consider pricing European options in sinh-regular L\'evy models and the Heston model,
the CIR model, and the subordinated NTS model, the subordinator being the aggregated square root process.
Calculation of quantiles and applications to the Monte-Carlo simulations in L\'evy models, regime-switching L\'evy models and the Heston model
are outlined in Section \ref{s:MC}. For calculation of quantiles in  wide regions in the tails using
the Newton or bisection method,  it suffices to precalculate several hundred of values of functions that appear in the characteristic exponents
of the cpdf and pdf ({\em conformal principal components}), 
at points of an appropriate grid,  and use these values to evaluate  cpdf and pdf. 
Section \ref{concl} summarizes the results of the paper and outlines natural extensions. 
Tables are in the appendix.

\section{SINH-regular L\'evy processes and infinitely divisible distributions}\label{sRLPE}
\subsection{Definition}
In \cite{NG-MBS}, two almost equivalent definitions of a wide family of {\em Regular L\'evy Processes of exponential type} (RLPE)
are given: one in terms of the L\'evy density (exponential decay at infinity), the other one in terms of
the characteristic exponent (analytic in a strip around the real axis for processes
on $\bR$, and, for processes on $\bR^n$, in a tube domain $\bR^n+iU$, where $U\subset \bR^n$ is an open set containing 0). The class
of RLPEs contains all
classes of processes (model classes) popular in quantitative finance. The class of tempered stable L\'evy process as defined in \cite{Rosinski07}
is a subclass of RPLEs. In \cite{iFT,paraLaplace,paired}, it was noticed
that the characteristic exponents of processes
of the model classes admit analytic continuation to much wider regions of the complex plane and appropriate Riemann surfaces,
and enjoy several properties useful for the construction of  new efficient methods for pricing
contingent claims.
In this paper, we relax the general conditions of the definition of {\em strongly regular L\'evy processes of exponential type} (sRLPE)
introduced in \cite{paraLaplace}. Additional conditions formulated in \cite{paraLaplace,paired} were needed for the construction
 of more efficient methods when the Wiener-Hopf factorization and the fractional-parabolic change of variables were used.
 When the sinh-acceleration is used instead, the advantage of these additional conditions is marginal.

 Let $\lm<0<\lp$ and $-\pi/2\le \gam<\gap\le \pi/2$ and  either $\gam\le 0<\gap$ or
 $\gam<0\le \gap$. We define  the conus
 $\cC_{\gam,\gap}=\{e^{i\varphi}\rho\ |\ \rho\ge 0, \varphi\in (\gam,\gap)\cup (\pi-\gap,\pi-\gam)\}$, 
 and the set
 \bbe\label{defU}
 U(\lm,\lp; \gam, \gap)=i(\lm,\lp)+\cC_{\gam,\gap}:=\{ia+z\ |\ \lm<a<\lp, z\in \cC_{\gam,\gap}\}.
 \ee
As in \cite{paraLaplace}, we represent the characteristic exponent in the form
\begin{equation}\label{reprpsi}
\psi(\xi)=-i\mu\xi+\psi^0(\xi),
\end{equation}
and impose  conditions on $\psi^0$.
We need coni $\cC=\cC_{\gam,\gap}$ of several kind:
 \begin{enumerate}[(1)]
 \item
 to describe a domain $U=i(\lm,\lp)+\cC_{\gam,\gap}$ of analyticity of the characteristic exponent:
 \item
 to introduce a subset $U^u=i(\lm,\lp)+\cC^u$ where $|\psi^0(\xi)|$ admits a useful  upper bound;
 \item
 to introduce a subset $U^l=i(\lm,\lp)+\cC^l$ where $\Re\psi^0(\xi)$ admits a useful lower bound.
 \end{enumerate}
 In many cases, the coni are around the real axis. However, if $\psi^0(\xi)=o(|\xi|)$ as $\xi\to \infty$ in the domain of analyticity,
 and $\mu\neq 0$,
 then,   to calculate the pdf not at the peak and price options that are not at the money,
  we will have to choose a domain $U'=i(\lm,\lp)+\cC'$, where the conus $\cC'$ is either in the upper half-plane
 or low half-plane.\footnote{ Additional conditions on sets of the form \eqref{defU} are needed when the Wiener-Hopf factors are calculated;
 these conditions depend on the spectral parameter.}

\begin{defin}\label{defSINHLevy}
We say that $X$ is a SINH-regular L\'evy process  (on $\bR$) of type $((\mum,\mup);\cC; \cC_+)$ and order $\nu\in (0,2]$ iff
the following conditions are satisfied:
\begin{enumerate}[(i)]
\item
$\mum<0<\mup$;
\item
$\cC_+\subset \cC\subset\bC$ are open coni adjacent to or containing the real axis;
\item
$\psi$, the characteristic exponent of $X$, admits analytic continuation to
$i(\mum,\mup)+ \cC$;
\item
for any 
$\mum'\in (\mum,0)$, $\mup'\in (0, \mup)$ and an open sub-cone $\cC^u\subset \cC$ adjacent to or containing the real axis,
there exist $C>0$ such that 
\begin{equation}\label{boundpsisRLPE}
|\psi^0(\xi)|\le C(1+|\xi|)^\nu,\quad \forall\ \xi\in i[\mum',\mup']+ \cC^u;
\end{equation}
\item
for any closed  sub-cone $\cC^l_+\subset \cC_+$ and any $[\mum',\mup']\subset (\mum,\mup)$, there exist
 $c, C>0$ such that 
\begin{equation}\label{boundpsisRLPEp}
\Re\psi^0(\xi)\ge c|\xi|^\nu-C, \quad \forall\ \xi\in i[\mum',\mup']+\cC^l_+.
\end{equation}
\end{enumerate}
We say that a distribution is a SINH-regular infinitely divisible distribution of type $((\mum,\mup);\cC; \cC_+)$ and order $\nu$ iff it is the distribution
of $X_1$, where $X$ is a SINH-regular L\'evy process of type $((\mum,\mup);\cC; \cC_+)$ and order $\nu$.

\end{defin}
\begin{rem}{\rm We can generalize this definition allowing either $\mum=0$ or $\mup=0$ (but not both). However,
an appropriate Esscher transform can be used to reduce each of these two cases to the case $\mum<0<\mup$. The prominent example for
the case
$\mum=0=\mup$ are the stable L\'evy distributions and processes; we consider this case in a separate publication.

}
\end{rem}
\begin{defin}\label{defSINHLevyell}
We say that $X$ is an elliptic SINH-regular L\'evy process  of type $((\mum,\mup);\cC; \cC_+)$ and order $\nu$ if conditions (i)-(iii)
and the following two conditions hold
\vskip0.1cm\noindent
(iv)' for any $\xi_0\in \cC\cap \{\xi\ |\ |\xi|=1\}$, as $\rho\to+\infty$,
\begin{equation}\label{asympsisRLPE}
\psi^0(\rho \xi_0)\sim  c_\infty(\mathrm{arg}\,\xi_0)\rho^\nu,
\end{equation}
where $c_\infty(\mathrm{arg}\,\xi_0):=c_\infty(\psi^0, \mathrm{arg}\,\xi_0)$ is continuous;
\vskip0.1cm\noindent
(v)' for any $\xi_0\in \overline{\cC_+^l}\cap \{\xi\ |\ |\xi|=1\}$, $\Re c_\infty(\mathrm{arg}\,\xi_0)>0$.

We say that a distribution is an elliptic SINH-regular infinitely divisible distribution iff it is the distribution
of $X_1$, where $X$ is an elliptic SINH-regular L\'evy process.

\end{defin}

\begin{rem}{\rm
a) The properties needed for efficient calculations are formulated in the language of Complex Analysis,
and cannot be naturally formulated in the probabilistic language. Indeed, such simple processes as the Brownian motion with an embedded 
compound Poisson process with the L\'evy density $\bfo_{[a,b]}$, where $a<0<b$, are RLPEs and their characteristic exponents are analytic in
the complex plane but, on any conus, $\Re\psi^0(\xi)$ is not semi-bounded, and the crucial property \eqref{boundpsisRLPEp} fails.
However, if either $a=0, b>0$ or $a<0, b=0$, then such a process is an elliptic SINH-regular process of type $((\mum,\mup);\cC; \cC_+)$,
where $\mum<0<\mup$ are arbitrary, and $\cC, \cC_+$ are coni in the lower half plane if $a<0, b=0$, and upper half plane if $a=0, b>0$.

b) The properties formalized in Definitions \ref{defSINHLevy} and \ref{defSINHLevyell}
for characteristic exponents hold for wide classes of the symbols of pseudo-differential operators (PDO)
which have no relation to probability, and the methods based on these properties can be applied to develop efficient numerical methods for
various boundary problems for such PDOs.
}
\end{rem}

\begin{rem}\label{rem_ell_str-subord}
{\rm If a subordinator $Y$ and process $X$ are elliptic SINH-regular L\'evy processes, then $\{X_{Y_t}\}$ is
an  elliptic SINH-regular L\'evy process. See Section \ref{subord} for an example.
}
\end{rem}
\subsection{Examples and some generalizations}\label{genexSINH}
\begin{enumerate}[(1)]
\item
 Essentially all L\'evy processes
used in quantitative finance are elliptic SINH-regular L\'evy processes:
 Brownian motion (BM), Merton model \cite{merton-model}, NIG (normal inverse Gaussian model) \cite{B-N}, hyperbolic
 processes \cite{EK}, double-exponential jump-diffusion model \cite{lipton-risk,lipton-columbia,kou,KW1,KW2},
 its generalization: hyper-exponential jump-diffusion model,
  introduced in \cite{ amer-put-levy-maphysto,lipton-risk} and studied in detail in  \cite{amer-put-levy-maphysto, amer-put-levy},
 the majority of processes of the $\beta$-class \cite{beta}; the generalized
 Koponen's family \cite{KoBoL} and its subclass KoBoL
 \cite{NG-MBS}. A subclass of KoBoL
(known as the CGMY model - see \cite{CGMY}) is given by the characteristic exponent
\begin{equation}\label{kbl}
\psi(\xi)=-i\mu\xi+c\Gamma(-\nu)[\lp^\nu-(\lp+i\xi)^\nu+(-\lm)^\nu-(-\lm-i\xi)^\nu],
\end{equation}
where $\nu\in (0,2), \nu\neq 1$ (in the case $\nu=1$, the analytical expression is different: see \cite{KoBoL,NG-MBS}).
Thus, KoBoL is SINH-regular of type $((\lm,\lp); \cC, \cC_+)$ and order $\nu$, where $\cC=\cC_{-\ga,\ga}$, $\ga\ge \pi/2$, and
$\cC_+=\cC_{-\ga', \ga'}$, where  $\ga'=\pi/(2\nu)$ (see (3) below for the meaning of $\ga'>\pi/2$).
 BM, DEJD and HEJD are of order $\nu=2$, and NIG is of order $\nu=1$.

 The characteristic exponents of NTS processes  constructed in \cite{B-N-L} are given by
 \begin{equation}\label{NTS2}
\psi(\xi)=-i\mu\xi+\de[(\al^2+(\xi+i\be)^2)^{\nu/2}-(\al^2-\be^2)^{\nu/2}],
\end{equation}
where $\nu\in (0,2)$, $\de>0$, $|\be|<\al$. This is a process of type $((\al+\be,\al-\be); \cC, \cC_+)$ of order $\nu$,
where $\cC$ and $\cC_+$ are the same as for KoBoL of the same order.

\item
 In order to consider Variance Gamma processes (VG) \cite{MM91}, Definitions \ref{defSINHLevy}-\ref{defSINHLevyell} must be generalized
replacing the function $\rho\mapsto \rho^\nu$ with a strictly increasing function
$w:\bR_{+}\to\bR_{+}$ satisfying $w(+\infty)=+\infty$. We say: $X$ is an (elliptic) SINH-regular L\'evy process of type
$(S,\cC,\cC_+, w)$. For Variance Gamma processes, $w(\rho)=\ln(1+\rho)$.

\item
  For KoBoL, VG  and NTS, $\psi^0$ admits analytic
continuation to an appropriate Riemann surface $\cR$, and $\cC$ can be defined as an appropriate subset of $\cR$.
Formally, in (1), $\ga>\pi/2$ is admissible with the understanding that seemingly overlapping parts of $\cC_{-\ga,\ga}$ lie on
different sheets of $\cR$.
See \cite{iFT,paraLaplace,paired},
where advantages of $\cC_+\subset \cR$ were utilized to increase the speed. The same extension is very useful
when the SINH-acceleration is applied to calculate the Wiener-Hopf factors, and less so for pricing European options.

\item
The asymptotic coefficient $c_\infty(\mathrm{arg}\,\xi_0)$ is
 \begin{enumerate}[(i)]
 \item
 if $X$ is BM, DEJD and HEJD, $c_\infty(\varphi)=(\sg^2/2)e^{i2\varphi}$, hence, $\cC_+=\cC_{-\pi/4,\pi/4}$;
 \item
  if $X$ is given by \eqref{kbl}, then, for $\varphi\in [-\pi/(2\nu),\pi/(2\nu)]$,
 \bbe\label{asKoBoL}
 c_\infty(\varphi)=-2c\Ga(-\nu)\cos(\pi\nu/2)e^{i\nu\varphi},
 \ee
 hence, $\cC_+=\cC_{-\pi/(2\nu),\pi/(2\nu)}$;
 \item
  if $X$ is given by \eqref{NTS2}, then, for $\varphi\in [-\pi/(2\nu),\pi/(2\nu)]$,
  \bbe\label{asNTS}
 c_\infty(i\varphi)=\de e^{i\nu\varphi},
 \ee
 hence, $\cC_+=\cC_{-\pi/(2\nu),\pi/(2\nu)}$.
\end{enumerate}
\item In \cite{KoBoL}, we constructed more general classes of L\'evy processes, with the characteristic exponents of the form
\begin{equation}\label{kblgen}
\psi(\xi)=-i\mu\xi+c_+\Gamma(-\nu_+)[\lp^{\nu_+}-(\lp+i\xi)^{\nu_+}]+c_-\Gamma(-\nu_-)[(-\lm)^{\nu_-}-(-\lm-i\xi)^{\nu_-}],
\end{equation}
where $c_\pm\ge 0$, $c_++c_->0$, $\lm<0<\lp, \nu_\pm\in (0,2), \nu_\pm\neq 1$, with  modifications in the case $\nu_+=1$ and/or $\nu_-=1$.
For these processes, the domains of analyticity and bounds are more involved. In particular, in general, the coni are not symmetric w.r.t. the real axis.
\item
In examples above, $\Re\psi^0(\xi)\to +\infty$ as $\xi\to\infty$ in a conus around the real axis, due to special forms
of the L\'evy densities, hence, characteristic functions. If $\psi^0(\xi)$ contains terms of the form $Ce^{ic\xi}$, where $c\in\bR$,
then
\begin{enumerate}[(i)]
\item
if $c<0$, then $\Re\psi^0(\xi)$ is not semibounded (from below) as $\xi\to\infty$ in any conus $\cC$ in the upper half-plane;
\item
 if $c>0$,
then $\Re\psi^0(\xi)$ is not semibounded  (from below)  as $\xi\to\infty$ in any conus in the lower half-plane.
\end{enumerate}
The simplest example is the BM with the embedded jumps, the L\'evy density being $\bfo_{[-a,b]}$. If $a<0<b$, then
$\Re\psi^0(\xi)$ is not semibounded as $\xi\to\infty$ in any conus $\cC$.
\item
If $X$ is the BM with embedded negative jumps only, and the jump density decays exponentially at infinity, then $X$ is an elliptic SINH-regular process of order 2 and type $((0,+\infty);  \cC, \cC_{0,\pi/4})$, where
$\cC$ is the upper half-plane. If $X$ is the BM motion with embedded positive jumps only, then $X$ is 
an elliptic SINH-regular process of order 2 and type $((-\infty,0);
\cC, \cC_{-\pi/4,0})$, where
$\cC$ is the lower half-plane.
\item
In Example (7), one may any add a positive (resp., negative) jump component as in KoBoL or exponential jump-diffusions. One can also
replace the BM with one-sided SINH-regular processes of type $((\mum,\mup), \cC, \cC_+)$, where  $\mum<0<\mup$ and $\cC_+, \cC$ are coni around the real axis.
In both cases, the type of the resulting process $X$ will be characterized by smaller domains of analyticity than in the case
of the BM with embedded one-sided jump components.
\item
Conditional distributions in affine stochastic volatility models and affine and quadratic interest rate models are sinh-regular.
\end{enumerate}

\subsection{Calculation of probability distributions}\label{SINHpdf}
The pdf of $X_t$ equals
\begin{equation}\label{pdfLevy}
p_t(x)=\frac{1}{2\pi}\int_\bR e^{-ix'\xi-t\psi^0(\xi)}d\xi,
\end{equation}
where $x'=x-\mu t$. Denote $g(\xi)=e^{-t\psi^0(\xi)}$. The change of variables \eqref{sinhbasic}
can be justified if the integrand $f(y)=e^{-ix' \chi_{\om_1,\om; b}(y)}g(\chi_{\om_1,\om; b}(y))\chi_{\om_1,\om; b}'(y)$ admits analytic continuation
to a strip $S_{(-d,d)}=\{y\in \bC\ |\ \Im y\in (-d,d)\}$ around the real line and decays sufficiently fast as $y\to\infty$ remaining
in the strip. In more detail, the Cauchy integral theorem allows us to deform the line of integration
$\{\Im\xi=\om_0\}$ into the contour $\cL_{\om_1,\om; b}:=\chi_{\om_1, \om; b}(\bR)$. In the integral over
$\cL_{\om_1, \om; b}$, we make the change of variables \eqref{sinhbasic}.
\begin{figure}
\scalebox{0.9}
{\includegraphics{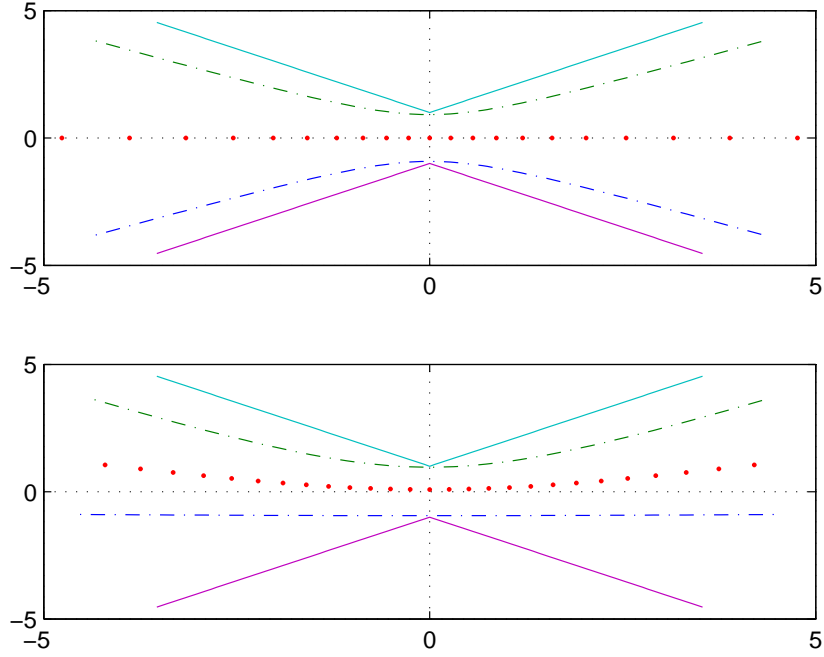}}
\caption{\small Solid lines: boundaries of the domain of analyticity $S_{(-1,1)}+\cC_{-\pi/4,\pi/4}$
in $\xi$-coordinate.
Dots: points $\xi_j=\chi_{\om_1,\om;b}(y_j)=i\om_1+b\sinh(i\om +y_j)$ used in the simplified trapezoid rule.
Dots-dashes: boundaries of the image $\chi_{\om_1,\om;b}(S_{(-d,d)})$ of the strip of analyticity $S_{(-d,d)}$.
 Upper panel: $\om_1=\om=0$, $d=\pi/4$, $b=1/\sin(\pi/4)$. Lower panel: $\om_1=-1, \om=d=\pi/8$, $b=2/\sin(\pi/8)$.
 For the calculations represented in the lower panel, only a smaller domain $S_{(-1,1)}+\cC_{0,\pi/4}$ matters.}

\end{figure}

The choice of the parameters of the sinh-acceleration depends on the type of the process,
its order, and $x'$. It is convenient to consider separately the following cases:
\begin{enumerate}[(1)]
\item $\cC_+=\cC_{\gam,\gap}$, where $\gam<0<\gap$,
$\nu\in (1,2]$ or $\nu\in (0,1]$ and $x'=0$: as
$\xi\to\infty$, the leading term of the asymptotics of $-ix'\xi-t\psi^0(\xi)$ is the same as the one of $-t\psi^0(\xi)$,
hence, one can choose the parameters of the sinh-acceleration without taking $x'$ into account.
The line of integration may remain flat or it can be deformed either upward or downward. The conus of analyticity
used to derive the error bound can be around the real axis (see the upper panel in Fig.1 for illustration), which allows one to use the mesh of a larger size than
in the other cases.  Naturally, if
$\nu>1$ is close to 1 and/or $x'$ is large in the absolute value, then it is safer to take the sign of $x'$ into account,
and deform the contour as in the case $\nu\in (0,1)$.

\item
$\cC_+=\cC_{\gam,\gap}$, where $\gam=0<\gap$,
$\nu\in (0,1)$, $x'<0$. As
$\xi\to\infty$, the leading term of the asymptotics of $-ix'\xi-t\psi^0(\xi)$
 is the same as the one of  $-ix'\xi$, hence, the deformed contour and the conus of analyticity
used to derive the error bound must be in the upper half-plane even if $\gam<0$ (see the lower panel in Fig.1 for illustration).
\item
$\cC_+=\cC_{\gam,\gap}$, where $\gam<0=\gap$,
$\nu\in (0,1)$, $x'>0$. As
$\xi\to\infty$, the leading term of the asymptotics of $-ix'\xi-t\psi^0(\xi)$  is the same as the one of  $-ix'\xi$, hence, the deformed contour and the conus of analyticity
used to derive the error bound must be in the lower half-plane even if $\gap>0$.

\item $\cC_+=\cC_{\gam,\gap}$, where $\gam<0<\gap$,
$\nu=1$, $x'\neq 0$. It is optimal to deform the contour but the conus of analyticity
used to derive the error bound can be around the real axis.
\item $\cC_+=\cC_{\gam,\gap}$, where either $\gam=0<\gap$ or $\gam<0=\gap$, and
$\nu=1$. The deformed contour and the conus of analyticity
used to derive the error bound must be in the lower half-plane even if $\gap>0$ and in the lower half-plane even if $\gam<0$.
\end{enumerate}
In the next subsection, we assume that $\cC_+=\cC_{\gam,\gap}$, where $\gam<0<\gap$. The reader can easily
modify the constructions below for the cases $\gam=0<\gap$ and $\gam<0=\gap$.

\subsubsection{The case $\nu\in (1,2]$ and the case
$\nu\in (0,1]$, $x'=0$}\label{pdfnuge1}
In these cases, for any $\ga^{-,'}\in (\gam,0), \ga^{+,'}\in (0,\gap)$,   there exists $c>0$ such that
\bbe\label{cinf1prime}
\Re c_\infty(\varphi)\ge c, \quad \ga^{-,'}\le \varphi\le \ga^{+,'}.
\ee
First, we choose $\om\in \bR$ and $d_0>0$ so that $\om+d_0\le \gap, \om-d_0\ge \gam$. Since $d_0$ is the upper bound
for the half-width of the strip of analyticity in the $y$-coordinate, we want to choose $d_0$ as large as possible. Hence, we set
\bbe\label{omd0}
\om=(\gap+\gam)/2,\ d_0=(\gap-\gam)/2;
\ee
then $\om+d_0=\gap, \om-d_0=\gam$.
Next, we must ensure that the intersection of the imaginary axis and the image of $S_{(-d_0,d_0)}$ under $\chi_{\om_1, \om; b}$
is a subset of $(\mum,\mup)$, which is equivalent to
$
\om_1+ba_+\le \mup, \om_1-b \mum\ge a_-,$ where
$a_-= \sin (\min\{\pi/2,-\gam\}), a_+=\sin (\min\{\pi/2,\gap\})$.
We define
\bbe\label{om1b0}
\om_1=\frac{\mup a_-+\mum a_+}{a_++a_-},\
b_0=\frac{\mup-\mum}{a_++a_-}.
\ee
If $\gam=-\gap$ (the case for KoBoL and NTS, the generalization of NIG), \eqref{om1b0} simplify
\bbe\label{om1b0s}
\om_1=\frac{\mup+\mum}{2},\
b_0=\frac{\mup-\mum}{2\sin (\min\{\pi/2,\gap\})}.
\ee
In \eqref{sinhbasic}, we choose $d<d_0$, $b<b_0$ close to $d_0, b_0$, respectively,
e.g., $d=0.95 d_0, b=0.95d_0$.
Then the integrand
\[
f(y)=(1/2\pi)\exp[-ix' \chi_{\om_1,\om; b}(y)-t\psi^0(\chi_{\om_1,\om; b}(y))]\chi_{\om_1,\om; b}'(y)
\] admits analytic continuation
to the strip $S_{(-d,d)}=\{y\in \bC\ |\ \Im y\in (-d,d)\}$ around the real line and decays sufficiently fast as $y\to\infty$ remaining
in the strip so that
\[
\lim_{A\to \pm\infty}\int_{-d}^d |f(ia+A)|da=0,\]
and the Hardy norm
\bbe\label{Hnorm}
H(f,d)=\lim_{a\downarrow -d}\int_\bR|f(ia+ y)|dy+\lim_{a\uparrow d}\int_\bR|f(ia+y)|dy
\ee
is finite.
Fix $\ze>0$ and
construct the grid $\{y_j=j\ze, j\in \bZ\}$. The discretization error
of the infinite trapezoid rule
\begin{equation}\label{pdfLevytrap}
p_t(x)=\ze\sum_{j\in \bZ} f(y_j)
\end{equation}
 admits an upper bound via $H(f,d)\exp[-2\pi d/\ze]/(1-\exp[-2\pi d/\ze])$  (see Theorem 3.2.1 in \cite{stenger-book} and Appendix in \cite{paraHeston} for a simple proof).
 In some cases, the Hardy norm of the integrand as a function on a maximal strip of analyticity
is infinite. In such cases, in order to use the universal bound for the discretization error, one has to apply the bound
to functions on
a narrower strip of analyticity; this explains our choices  $d<d_0$ and $b<b_0$. A fairly accurate approximate bound
for $H(f,d)$ can be derived relatively easily but, as in the case of the fractional-parabolic deformations \cite{paraHeston,pitfalls}, the following crude approximation
works well if the initial strip of analyticity is not very narrow and, typically, leads to an overkill:
\bbe\label{hfdbound}
H(f,d)=C(|f(-id)|+|f(id)|),
\ee
where $C=10$.
 To satisfy a small error tolerance $\eps>0$,
we choose $\ze =2\pi d/(\ln(H(f,d)/\eps)\sim 2\pi d/E$, where $E=\ln(1/\eps)$.
The choice of $N$, the number of terms of the simplified trapezoid rule, equivalently,
of the truncation parameter $\La=N\ze$, is somewhat more involved. The truncation error of the simplified trapezoid rule
\begin{equation}\label{pdfLevysimpltrap}
p_t(x)=\ze\sum_{|j|\le N} f(y_j)
\end{equation}
can be approximated by the truncation error of the integral
\begin{equation}\label{truncerror}
Err_{tr}=2\int_\La^{+\infty} |f(y)|dy.
\end{equation}
If $\La$ is large, then, on $[\La,+\infty)$, we can use an approximation
\[
\chi_{\om_1,\om; b}(y)\sim (b/2)e^y e^{i\om},\ \chi_{\om_1,\om; b}'(y)\sim (b/2)e^y e^{i\om},\]
to derive an approximate upper bound
\[
2\int_\La^{+\infty} |f(y)|dy\le \frac{1}{\pi}\int_{\La_1}^{+\infty}e^{\Re(-ix'e^{i\om}\rho-t\psi(\rho e^{i\varphi})}d\rho,
\]
where $\La_1=(b/2)e^\La$. The bound can be simplified (at the expense of some loss in accuracy) as
\bbe\label{truncerror2}
Err_{tr}(\La_1)\le \frac{e^{tC_0}}{\pi}\int_{\La_1}^{+\infty} e^{(x'\sin\om)\rho-t\Re c_\infty(\om)\rho^\nu}d\rho,
\ee
where $C_0=c\Ga(-\nu)[\lp^\nu+(-\lm)^\nu]$ in the case of KoBoL, and $C_0=\de(\al^2-\be^2)^{\nu/2}$ in the case of NTS processes.
If $c_\infty(\om)=c_\infty(0)e^{i\om\nu}$ as in the cases of KoBoL and NTS, then \eqref{truncerror2} can be written as
\bbe\label{truncerror3}
Err_{tr}(\La_1)\le \frac{e^{tC_0}}{\pi}\int_{\La_1}^{+\infty} e^{(x'\sin\om)\rho-t c_\infty(0)\cos(\om\nu)\rho^\nu}d\rho.
\ee
Given the error tolerance $\eps>0$, one can find an approximation to $\La_1$ satisfying $Err_{tr}(\La_1)<\eps$
quite easily (see \cite{iFT, paraHeston, pitfalls}), and then define
\bbe\label{LaN}
\La=\ln(2\La_1/b),\ N=\mathrm{ceil}(\La/\ze).
\ee

\subsection{The case $\nu\in (0,1)$, $x'<0$} Since $-ix'\xi-t\psi^0(\xi)\sim -ix'\xi$ as $\xi\to\infty$, we use the same construction as above replacing $\gam$ and $\gap$ with
$\gam_0=0$ and $\gap_0=\min\{\gap,\pi\}$.

\subsection{The case $\nu\in (0,1)$, $x'>0$} We use the same construction as above replacing $\gam$ and $\gap$ with
$\gam_0=\max\{\gam,-\pi\}$ and $\gap_0=0$.

\subsection{The case $\nu=1$, $x'\neq 0$} For simplicity, consider the case $c_\infty(\varphi)=c_\infty(0) e^{i\varphi}$, where $c_\infty(0)>0$ is
independent of $\varphi$.  As $\rho\to+\infty$,
\[
-ix'\rho e^{i\varphi}-t\psi^0(\rho e^{i\varphi})=(-ix'-tc_\infty(0))e^{i\varphi}=-\sqrt{(x')^2+(tc_\infty(0))^2}e^{i(\varphi+\varphi_0)},
\]
where $\varphi_0=\arctan(x'/(tc_\infty))$. Hence, we use the same constructions as above with
\begin{enumerate}[(i)]
\item
$\gam=-\pi/2-\varphi_0$, $\gap=\pi/2$, if $x'<0$ (hence, $\varphi_0<0$);
\item
$\gam=-\pi/2, \gap=\pi/2-\varphi_0$, if $x'>0$ (hence, $\varphi_0>0$).
\end{enumerate}
The bound for the truncation error \eqref{truncerror2} can be made explicit
\bbe\label{truncerror22}
Err_{tr}(\La_1)\le \frac{e^{tC_0}}{\pi(x'\sin\om+t c_\infty\cos\om)}e^{-(-x'\sin\om+t c_\infty(0)\cos\om)\La_1}.
\ee
Given $\eps>0$, we find
\bbe\label{La1nu1}
\La_1=\frac{\ln(1/\eps)+tC_0-\ln(\pi(-x'\sin\om+t c_\infty\cos\om))}{x'\sin\om+t c_\infty\cos\om},
\ee
and then apply \eqref{LaN}.

\subsection{Complexity of the scheme based on the sinh-acceleration} As $\eps\downarrow 0$, $\La\sim \ln E,$ where
$E=E(\eps)=\ln(1/\eps)$, and
$\ze\sim E/(2\pi d)$, where $d<(\gap-\gam)/2$ is fixed. Hence, the complexity of the scheme is of the order of
$A(d)E\ln E$, where $A(d)$ can be arbitrarily close to $1/(\pi (\gap-\gam))$ if $d$ is chosen sufficiently close to $(\gap-\gam)/2$.
Note that if the integrand admits the analytic continuation to an appropriate Riemann surface, then $\gap$ and/or $-\gam$ can be larger than $\pi/2$.
In particular,  then, for NTS and KoBoL of order $\nu\in (0,1)$,
\begin{enumerate}[(i)]
\item
if $x'=0$, then $\gap=-\gam=\pi/(2\nu)$,
\item if $x'<0$, then $\gap=\min\{\pi, \pi/(2\nu)\}$, $\gam=0$;
\item
if $x'>0$, then $\gap=0$,  $\gam=-\min\{\pi, \pi/(2\nu)\}$.
\end{enumerate}
This implies that, rather counter-intuitively, the (asymptotic) complexity of the scheme decreases with $\nu$, whereas for
the flat iFT and the scheme based on the fractional-parabolic deformations, the complexity of the scheme increases as $\nu$ decreases.

\begin{rem}\label{b-rem} {\rm The approximate bound for the complexity of the scheme derived above implicitly assumes that the strip is neither too wide
nor too narrow, hence, $b$ is neither too large nor too small. If the width of the strip becomes too large or too small,
the Hardy norm may become too large. Hence, the approximations $\ln(2\La_1/b)\sim \ln \La_1$
and $\ln(H/\ze)\sim \ln(1/\eps)$
which we used to access the complexity of the scheme may become not very accurate. The problem of a too wide strip can be fixed  using a moderately wide strip instead of a very wide one.

If the strip $S_{(\mum,\mup)}$ is too narrow, but the Hardy norm does not tend to infinity as the strip shrinks (the case of the NTS model and KoBoL models),
then the sinh-change of variables implies the rescaling which reduces the calculations to the case of a strip $S_{(-d,d)}$, where $d=k_dd_0$, $d_0=(\gap-\gam)/2$,
$k_d=0.9-0.95$. Hence,
a more accurate approximation of the truncation parameter is $\ln\La_1+\ln(1/(\mup-\mum))$ instead of $\ln\La_1$. Even if the initial strip is extremely narrow, say, of the width
$10^{-8}$, the recommended truncation parameter increases by less than 30, and the number of terms needed to satisfy even vary small error tolerance
remains quite moderate. In the case of the fractional-parabolic change of variables
of order $\al$ (typically, $\al\in (1, 2)$),
the truncation parameter increases by a factor of the order of $(\mup-\mum)^{-1/\al}$, which can be large if $\mup-\mum$, the
width of the strip, is very small (see Section \ref{complexityFrPar} for the choice of the parameters
of the fractional-parabolic method and analysis of its complexity). However, it is important that the angle $\gam-\gap$ between the rays $e^{i\gam}\bR_+$ and $e^{i\gap}\bR_+$ be not too small.
}
\end{rem}

\subsection{Numerical example}
In Section \ref{tablespdfNTS}, Tables 1 and 2, we show the pdf of $X_t$ 
with the characteristic exponent \eqref{NTS2} calculated using the sinh-acceleration, fractional parabolic change of variables and the standard inverse Fourier transform method
(flat iFT). The parameters of the process are  $\mu=0$, $\al=10, \be=0$
for $t=0.004$; $\de=m_2\la^{\nu-2}$, where $m_2=\psi^{\prime\prime}(0)=0.1$ is the second instantaneous moment.
In Table 1, $\nu$ varies, and the pdf is calculated at the peak. In Table 2, $\nu=0.3$ is fixed,
and $x$ varies.

For the flat iFT, we use the accurate prescriptions for the choice of the mesh size $\ze$ derived in \cite{iFT}, for the error tolerance
$\eps=10^{-7}$. By trial and error, we find that, due to the oscillation of terms in
the infinite trapezoid rule, it is possible to increase the size of the mesh: $\ze/k_\ze$ where $k_\ze=0.6$. Hence, we may use a smaller number of terms in the simplified trapezoid rule
to satisfy a given error tolerance for the truncation error. Nevertheless, as the results shown in Tables 1 and 2 demonstrate,
the flat iFT may require extremely large number of terms. At the same time, the sinh-acceleration allows one  to satisfy a small error tolerance
with several dozen of terms; fractional-parabolic method is less efficient than the sinh-acceleration.

 It is easily see that unless $\nu$ is not small, equivalently, the process is close to the BM,
  it is essentially impossible to calculate the pdf at the peak sufficiently accurately and fast
 which is needed for an efficient MLE. A similar problem arises when the cumulative pdf is calculated and applied for
 simulation purposes.

\section{Meromorphic SINH-regular  distributions on $\bR$}\label{simpleII}
\subsection{Cumulative pdf in SINH-regular models} In this case, we need to evaluate the integral
of the same type but with an additional factor $-1/(i\xi)$ under the integral sign:
\bbe\label{cpdf0}
\bP[X_t<x]=\frac{1}{2\pi}\int_{\Im\xi=\om_0}\frac{e^{-ix'\xi-t\psi^0(\xi)}}{-i\xi}d\xi,
\ee
where $\om_0\in (0,\mup)$. Since there is a pole at 0, we can apply the same scheme as in Section \ref{SINHpdf}
replacing $\mum$ with 0; the truncation parameter will be somewhat smaller due to the additional decaying factor
$1/(i\xi)$. To be more specific, instead of \eqref{truncerror2}, we have the error bound
\bbe\label{truncerror2cpdf}
Err_{tr}(\La_1)\le \frac{e^{tC_0}}{\pi}\int_{\La_1}^{+\infty} e^{-(x'\sin\om)\rho-t\Re c_\infty(\om)\rho^\nu}\rho^{-1}d\rho.
\ee
If $x'>0$ or $\mup$ is small and $-\mum>\mup$, it is advantageous to move the line of integration
in \eqref{cpdf0} down, and, on crossing the simple pole, apply the residue theorem:
\bbe\label{cpdf2}
\bP[X_t<x]=1+\frac{1}{2\pi}\int_{\Im\xi=\om'_0}\frac{e^{-ix'\xi-t\psi^0(\xi)}}{-i\xi}d\xi,
\ee
where $\om'_0\in (\mum,0)$. The integral on the RHS of \eqref{cpdf2} is calculated as in Section \ref{SINHpdf}, with
$\mup=0$.

\subsection{Puts and calls in SINH-regular L\'evy models}
Let $r$ be the riskless rate, $\tau$ the time to maturity, $K$ the strike, and $S$ the spot.
Set $x'=\ln(S/K)+\mu \tau$. Assuming that $\mum<-1$, the price of the call option is given by
\bbe\label{call}
V_{\mathrm{call}}(K;\tau, S)=-\frac{Ke^{-r\tau}}{2\pi}\int_{\Im\xi=\om_0}\frac{e^{ix'\xi-\tau\psi^0(\xi)}}{\xi(\xi+i)}d\xi,
\ee
where $\om_0\in (\mum, -1)$. The put price is given by the same integral but with
$\om_0\in (0,\mup)$, and the price of the covered call by the same integral but with  $\om_0\in (-1,0)$.

For the call, we use the same scheme as above with $\mup$ replaced with  $-1$, for the put, $\mum$ is replaced by $0$, and
for the covered call, we use $\mum=-1, \mup=0$. If $x'=0$, we use the $\gam<0<\gap$ from the definition of the sinh-regular process
to define $\om=(\gap+\gam)/2$ and use the conus $\cC_{\gam,\gap}$ to derive the recommendation for the choice of $\ze$ and $N$;
if $x'>0$, we replace $\gam$ with 0 so that the wings of the deformed contour point upward and the factor $e^{ix'\xi}$ decays as $\xi\to \infty$
in the conus $\cC_{0,\gap}$ used to derive the recommendations for the choice of $\ze$ and $N$, and set $\om=\gap/2$;
if $x'<0$, we replace $\gap$ with 0 so that the wings of the deformed contour point downward
and the factor $e^{ix'\xi}$ decays as $\xi\to \infty$
in the conus $\cC_{\gam,0}$ used to derive the recommendations for the choice of $\ze$ and $N$, and set $\om=\gam/2$.
Note that in all cases, $A(\om):=-x'\sin\om\ge 0, B(\om)=\tau\Re c_\infty(\om)>0$.

The bound for the truncation error is
\bbe\label{truncerror2call}
Err_{tr}(\La_1)\le \frac{e^{\tau C_0}}{\pi}\int_{\La_1}^{+\infty} e^{(x'\sin\om)\rho-\tau\Re c_\infty(\om)\rho^\nu}\rho^{-2}d\rho.
\ee
We can use a more stringent bound
\bbe\label{truncerror2call2}
Err_{tr}(\La_1)\le \frac{e^{\tau C_0}}{\pi \La_1} e^{(x'\sin\om)\La_1-\tau\Re c_\infty(\om)\La_1^\nu}.
\ee
Given the error tolerance $\eps>0$, we have the equation for $\La_1$:
\bbe\label{eqLa1}
F(\La_1):=A(\om)\La_1+B(\om)\La_1^\nu+\ln\La_1-C=0,
\ee
where $C=\tau C_0-\ln(\pi \eps)$. The equation can be
solved easily and fast since it is unnecessary to achieve a high precision.

\subsection{Pricing European puts and calls in the Heston model}\label{EuroHeston}
Consider the Heston model \cite{heston-model} with constant riskless and dividend rates $r$ and $\de$
on stock (or exchange rate) $S_t$. To be more specific, we assume that, under an EMM $\bQ$ chosen for pricing,
  $S_t$ and the stock volatility $v_t$
follow the system of stochastic differential equations (SDE)
\begin{eqnarray}\label{sdeS}
 \frac{dS_t}{S_t}& =&(r-\de) dt + \sqrt{v_t}d\hat W_{1,t},\\\label{sdev}
  d v_t& = &\ka(m-v_t)dt +
  \sg_0\sqrt{v_t}dW_{2,t},
  \end{eqnarray}
  where $\hat W_{1,t}, W_{2,t}$ are components of the Brownian motion in 2D with unit variances
  and correlation coefficient $\rho$. Starting with \cite{heston-model}, prices of European options in the Heston
  model have been calculated using the Fourier transform technique (the first instance of using this standard technique
  in finance). For an overview of different realizations of the pricing formula, see \cite{paraHeston};
  we use the realization derived in \cite{paraHeston}.  Let $V_{\rm put}(T,K; t, S_t, v_t)$ and $V_{\rm call}(T,K; t, S_t, v_t)$
  be the put and call options on $S_t$ with strike $K$ and maturity date $T$, at time $t<T$.
\begin{thm}[\cite{paraHeston}]\label{thmEuroprice}  Let $\tau=T-t$ be the time to maturity and let
$\lm(\tau)<-1<0<\lp(\tau)$ be reals such that
  $\bE^\bQ[S_T^{\la_\pm(\tau)}\ |\ S_t,  v_t]<\infty$. Then, for any $\om_0 \in (0,\lp(\tau))$,
  \begin{equation}\label{putprice}
   V_{\rm put}(T, K; t, S_t, v_t)=-\frac{Ke^{-r\tau}}{2\pi}
   \int_{\Im\xi=\om_0}\frac{e^{i\xi z_t+(v_tB_0(\tau,\xi)+C_0(\tau,\xi))/\sg_0^{2}}}{\xi(\xi+i)}d\xi,
\end{equation}
and for any $\om_0\in (\lm(\tau),-1)$,
\begin{equation}\label{callprice}
   V_{\rm call}(T,K; t, S_t, v_t)=-\frac{Ke^{-r\tau}}{2\pi}
  \int_{\Im\xi=\om_0}\frac{e^{i\xi z_t+(v_tB_0(\tau,\xi)+C_0(\tau,\xi))/\sg_0^{2}}}{\xi(\xi+i)}d\xi,
\end{equation}
where $z_t=\log(S_t/K)-(\rho/\sg_0)v_t+\mu_0\tau$,
$\mu_0=r-\de-\ka m\rho/ \sg_0$,
\begin{eqnarray}\label{defB0}
B_0(\tau,\xi)&=& (\ka-R(\xi))\frac{1-D_1(\xi)e^{-\tau R(\xi)}}{1-D(\xi)e^{-\tau R(\xi)}}\\\label{defC0}
C_0(\tau,\xi)&=&\ka m \left((\ka-R(\xi))\tau-2\ln\frac{1-D(\xi)e^{-\tau R(\xi)}}{1-D(\xi)}\right),
\\\label{defR0}
R(\xi)&=&\sqrt{\ka^2+(\sg_0^2-2\rho\ka)i\xi+\sg_0^2(1-\rho^2)\xi^2}\\\label{defD0}
D(\xi)&=&\frac{\rho \sg_0i\xi-\ka+R(\xi)}{\rho \sg_0i\xi-\ka-R(\xi)}\\\label{defD01}
D_1(\xi)&=&D(\xi)\frac{\ka+R(\xi)}{\ka-R(\xi)}
\end{eqnarray}
\end{thm}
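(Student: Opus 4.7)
The plan is to derive the two integral representations by combining (a) a Fourier representation of the put/call payoff that is valid on an appropriate strip in the complex $\xi$-plane, (b) the affine structure of the Heston model, which gives the conditional characteristic function of $\log S_T$ given $(S_t,v_t)$ in closed form, and (c) Fubini's theorem, after checking that integrability holds on the chosen strip.

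First I would set $X_T=\log S_T$ and recall the standard "dampened Fourier representation" of the put payoff: for $\om_0\in(0,\lp(\tau))$,
\[
(K-e^{X_T})_+=-\frac{K}{2\pi}\int_{\Im\xi=\om_0}\frac{e^{i\xi(X_T-\log K)}}{\xi(\xi+i)}\,d\xi,
\]
and, after a contour shift crossing the poles at $0$ and $-i$, the analogous formula for the call payoff, valid for $\om_0\in(\lm(\tau),-1)$. Taking discounted risk-neutral conditional expectation, and exchanging expectation and integration (this is the step that needs justification), reduces the problem to computing $\bE^\bQ[e^{i\xi X_T}\mid S_t,v_t]$ on the chosen strip. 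This is where the hypothesis $\bE^\bQ[S_T^{\la_\pm(\tau)}\mid S_t,v_t]<\infty$ enters: it guarantees that $\bE^\bQ[e^{i\xi X_T}\mid \cdot]$ is well-defined and uniformly bounded on $\Im\xi\in[\lm(\tau),\lp(\tau)]$, so that Fubini applies and the Fourier representation of the payoff matches the strip of validity of the characteristic function.

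Next I would compute this conditional characteristic function. Writing $X_T=X_t+\int_t^T(r-\de-v_s/2)ds+\int_t^T\sqrt{v_s}dW_{1,s}$ and using the decomposition $\hat W_{1,s}=\rho W_{2,s}+\sqrt{1-\rho^2}\,W_{1,s}^\perp$, one extracts a term that depends on $v_t$ and $dv_s$ explicitly via It\^o's formula applied to the SDE \eqref{sdev}. Substituting $\int_t^T\sqrt{v_s}\rho\,dW_{2,s}=(\rho/\sg_0)(v_T-v_t-\ka m\tau+\ka\int_t^Tv_s ds)$ produces the shift $z_t=\log(S_t/K)-(\rho/\sg_0)v_t+\mu_0\tau$ in the exponent. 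After conditioning on the $\sg$-field generated by $v$, one arrives at an affine quantity $\bE^\bQ[\exp(i\xi z_t+\al(\xi)\int_t^T v_s ds+\be(\xi)v_T)\mid v_t]$ for explicit $\al(\xi),\be(\xi)$. By the affine property of CIR, this conditional expectation equals $\exp(i\xi z_t+(v_tB_0(\tau,\xi)+C_0(\tau,\xi))/\sg_0^2)$ with $B_0,C_0$ solving the Riccati system
\[
\partial_\tau B_0=\tfrac12B_0^2-\ka B_0+\tfrac12\sg_0^2\al(\xi),\qquad
\partial_\tau C_0=\ka m B_0,\qquad B_0(0,\xi)=\sg_0^2\be(\xi),\ C_0(0,\xi)=0.
\]
Solving the Riccati ODE explicitly, with $R(\xi)$ playing the role of the discriminant and $D(\xi), D_1(\xi)$ the boundary-condition constants, yields the formulas \eqref{defB0}--\eqref{defD01}; this is a direct, if tedious, integration.

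The main obstacle is the analytic/integrability verification: one must show that for $\xi$ on the strip $\Im\xi=\om_0$ with $\om_0\in(\lm(\tau),\lp(\tau))$, the solution $B_0(\tau,\xi)$ has non-positive real part (so that the exponential is bounded) and that $R(\xi)$ is defined with a branch cut outside the strip, so that the closed-form formula remains valid without crossing a singularity. This is essentially the "little Heston trap" analysis: choosing the correct sign of the square root in $R(\xi)$ and verifying that $1-D(\xi)e^{-\tau R(\xi)}$ does not vanish on the strip. Once this is done, Fubini is justified, the integral representation follows, and the put formula \eqref{putprice} is obtained directly, while the call formula \eqref{callprice} follows either by put--call parity combined with a residue calculation when shifting the contour from $\om_0>0$ down through the poles at $0$ and $-i$, or, equivalently, by repeating the payoff Fourier representation argument on the lower strip.
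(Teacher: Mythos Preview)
The paper does not supply its own proof of this theorem: it is quoted verbatim from the cited reference \cite{paraHeston}, and only the subsequent remarks and asymptotic analysis are carried out in the present paper. There is therefore nothing to compare your proposal against here.

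That said, your outline is the standard derivation and matches what is done in \cite{paraHeston}: the Fourier representation of the payoff on the appropriate strip, the substitution $\int_t^T\sqrt{v_s}\,\rho\,dW_{2,s}=(\rho/\sg_0)\bigl(v_T-v_t-\ka m\tau+\ka\int_t^T v_s\,ds\bigr)$ obtained from the SDE \eqref{sdev} (which is exactly what produces the shifted variable $z_t$ and the drift $\mu_0$), reduction to an affine moment of the CIR process, and explicit solution of the Riccati system. Your identification of the analytic obstacle---the branch choice for $R(\xi)$ and the non-vanishing of $1-D(\xi)e^{-\tau R(\xi)}$ on the strip---is also the correct point of care, and is precisely what Remark~\ref{rem:Heston1}(b) and the reference \cite{Lucic} address. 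So your plan is sound; it simply reproduces the argument of the cited source rather than anything in the present paper.
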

\begin{rem}\label{rem:Heston1}{\rm
a) Let $\{\Im\xi\in (\lm(\tau), \lp(\tau))\}$ be the maximal strip of analyticity of the characteristic function. Introduce a quadratic polynomial
$
P(\be)=\ka^2-(\sg_0^2-2\rho\ka)\be-\sg_0^2(1-\rho^2)\be^2,$
and denote by $\lm^0<0<\lp^0$ its roots. It is proved in \cite{paraHeston}
that $\lm(\tau)\le \lm^0$ and $\lp^0\le \lp(\tau)$, and a procedure for
the calculation of $\lm(\tau)$ and $\lp(\tau)$ is derived. As numerical examples in \cite{paraHeston} indicate,
typically, $\lm^0$ and $\lp^0$ are rather close to $\lm(\tau)$ and $\lp(\tau)$, respectively, hence,
there is no sizable advantage in using a rather complicated procedure for the calculation of $\lm(\tau)$ and $\lp(\tau)$.
In the numerical procedure of the present paper, we use $\lm^0$ and $\lp^0$.

b) As it is proved in \cite{Lucic}, the (conditional) characteristic function admits the analytic continuation to
the complex plane with the cuts $i(-\infty,\lm(\tau)]$ and $i[\lp(\tau),+\infty)$, hence, to the complex plane with the cuts $i(-\infty,\lm^0]$ and $i[\lp^0,+\infty)$.

}
\end{rem}
Choose the strip of analyticity $S_{(\lm^0,-1)}$, $S_{(-1,0)}$, or $S_{(0,\lp^0)}$,
and move the line of integration into the strip; use
the residue theorem when a pole (or the two poles) of the integrand is (are) crossed.
Unless a strip is too narrow or wide (see Remark \ref{b-rem}), one should choose a curve in the upper half-plane if $z_t>0$, and in the lower
half-plane if $z_t<0$. Otherwise, the universal recommendation for the choice of the truncation parameter
becomes inaccurate: one must add the length of the part of the deformed contour which is in the ``incorrect" half-plane
where the factor $e^{iz_t\xi}$ is large in the absolute value.
\sbr
 Let $S_{(\mum,\mup)}$ be the chosen strip.
It follows from Remark \ref{rem:Heston1} that the conditional distribution of the price is sinh-regular of order $\nu=1$ and type
$((\mum,\mup);  \cC_{-\pi/2,\pi/2}, \cC_{-\gam,\gap})$,
where $\ga^\pm$ are defined as for elliptic L\'evy processes of order $\nu=1$ with $x'=z_t$ and
\[
c_\infty(\varphi)=\frac{1}{\sg_0^2}\lim_{\rho\to+\infty}\rho^{-1}(v_tB_0(\tau,\rho e^{i\varphi})+C_0(\tau,e^{i\varphi})).\]
To find $c_\infty(\varphi)$, we calculate the asymptotics of $R(\xi), D(\xi), B_0(\tau,\xi)$ and $C_0(\tau,\xi)$ as $\xi\to\infty$ remaining in the right-half plane:
\beqa\label{Rxias}
R(\xi)&=&\sg_0(1-\rho^2)^{1/2}\xi\left(1+\frac{\sg_0^2-2\rho\ka}{\sg_0^2(1-\rho^2)}i\xi^{-1}+O(\xi^{-2})\right)^{1/2}\\\nonumber
&=&\sg_0(1-\rho^2)^{1/2}\xi+ i\frac{\sg_0^2-2\rho\ka}{2\sg_0^2(1-\rho^2)}+O(\xi^{-1})\\\label{1Dxi}
1-D(\xi)&=&\frac{-2R(\xi)}{\rho\sg_0i\xi-\ka-R(\xi)}
=\frac{-2}{\rho\sg_0i\xi/R(\xi)-1+O(\xi^{-1})}\\\nonumber
&=&\frac{2}{1-i\rho/(1-\rho^2)^{1/2}}+O(\xi^{-1})\\\label{B0C0}
v_tB_0(\tau,\xi)+C_0(\tau,\xi)&=&(v_t+\ka m\tau)(\ka-R(\xi))+2\ka m\ln(1-D(\xi))+O(\xi^{-1})\\\nonumber
&=&-(v_t+\ka m\tau)\left(\sg_0(1-\rho^2)^{1/2}\xi-\ka+i\frac{\sg_0^2-2\rho\ka}{2\sg_0(1-\rho^2)^{1/2}}\right)
\\\nonumber
&&+2\ka m\ln(1-D(\xi))+O(\xi^{-1})\\
\label{exp1Dxi}
e^{2\ka m\ln(1-D(\xi)}&=&\left(\frac{2}{1-i\rho/(1-\rho^2)^{1/2}}\right)^{2\ka m}.
\eqa
It follows that, as $\xi\to\infty$ in the right half-plane, the integrand on the RHS of \eqref{callprice} has
the following asymptotics:
\bbe\label{asHeston}
C_\infty\frac{e^{iz_t\xi-c_\infty(0) \xi}}{\xi^2}(1+O(\xi^{-1})),
\ee
where
\beqa\label{CinfH}
C_\infty&=&\frac{Ke^{-r\tau}}{2\pi}\left(\frac{2}{i\rho/(1-\rho^2)^{1/2}-1}\right)^{2\ka m}\\\nonumber
&& \cdot\exp\left[(v_t+\ka m\tau)
\left(\ka-i\frac{\sg_0^2-2\rho\ka}{2\sg_0(1-\rho^2)^{1/2}}\right)\right],
\\\label{cinfH}
c_\infty(0)&=&(v_t+\ka m\tau)\sg_0(1-\rho^2)^{1/2}.
\eqa
Note that
\bbe\label{Cinfty2}
|C_\infty|=\frac{Ke^{-r\tau}}{2\pi}e^{(v_t+\ka m\tau)\ka}\left(4(1-\rho^2)\right)^{\ka m}.
\ee
Set  $\varphi_0=-\arctan(z_t/c_\infty(0))$, and
\begin{enumerate}[(i)]
\item
$\gam=-\pi/2-\varphi_0$, $\gap=\pi/2$, if $z_t>0$ (hence, $\varphi_0<0$),
\item
$\gam=-\pi/2, \gap=\pi/2-\varphi_0$, if $z_t<0$ (hence, $\varphi_0>0$).
\end{enumerate}
Thus, $\gam\in [-\pi/2,0), \gap\in (0,\pi/2]$. We define $\om$ and $d_0$ by \eqref{omd0},
then $\om+d_0=\gap, \om-d_0=\gam$. 
Next, we must ensure that the intersection of the imaginary axis and the image of $S_{(-d_0,d_0)}$ under $\chi_{\om_1, \om; b}$
is a subset of $i(\mum,\mup)$, which is equivalent to
$
\om_1+ba_+\le \mup, \om_1-b a_- \ge\mum,$ where
$a_-=-\sin\gam, a_+=\sin \gap$. Hence, we define $\om_1$ and $b_0$ by \eqref{om1b0}.
We choose $d<d_0$, $b<b_0$ close to $d_0, b_0$, respectively,
e.g., $d=0.95 d_0, b=0.95d_0$, and, for the given error tolerance $\eps$, set
$\ze =2\pi d/(\ln(H(f,d)/\eps)\sim 2\pi d/E$, where $E=\ln(1/\eps)$.

The approximate bound for the truncation error is
\bbe\label{truncerror222}
Err_{tr}(\La_1)\le 2|C_\infty|e^{-(z_t\sin\om+c_\infty(0)\cos\om)\La_1}/\La_1.
\ee
Given $\eps>0$, we find a moderately accurate approximation to the solution of the equation
\bbe\label{La1nu11}
(z_t\sin\om+c_\infty(0)\cos\om)\La_1+\ln\La_1-\ln(|C_\infty|/\eps)=0,
\ee
and then calculate
$\La=\ln(2\La_1/b),\ N=\mathrm{ceil}(\La/\ze).
$

\subsection{Complexity of the scheme} As $\eps\downarrow 0$, $\La\sim \ln E,$ where $E=E(\eps)=\ln(1/\eps)$, and
$\ze\sim E/(2\pi d)$, where $d<(\gap-\gam)/2$ is fixed. Hence, the complexity of the scheme is of the order of
$A(d)E\ln E$, where $A(d)=1/(2\pi d)<2/\pi^2$ if $d$ is chosen sufficiently close to $(\gap-\gam)/2>\pi/4$.


\subsection{Numerical results}
 In Section \ref{tableHeston}, Tables 3-8, we produce results for European put in the Heston model, and compare the performance
 of the sinh-acceleration with the fractional-parabolic change of variables. We adjust the recommended $\ze$ and $\La=N\ze$ dividing
$\ze$ by $k_\ze$ and multiplying $\La$ by $k_\La$. We show $\ze$, $N$ and the resulting errors for each choice of $\eps, k_\ze$ and $k_\La$.
The errors (rounded)
are calculated with respect to the benchmark prices (rounded). The latter are obtained using several sets of the parameters of the numerical scheme;
the results differed by less than  E-13. 
In Table 9, we compare the performance of the sinh-acceleration method with the Lewis-Lipton and Carr-Madan realizations of the flat iFT method.
 In all cases, the standard prescriptions ($\ze=0.125$, $N=4096$) imply negligible  truncation errors, hence,
the non-negligible errors shown are, essentially, the
discretization errors.  

\section{Options on bond in the CIR model}\label{CIRbond}
\subsection{Characteristic function} In the CIR model, the state space is $\bR_+$, the dynamics of the short rate is given by
\begin{equation}\label{SDECIR}
dr_t=\ka(\theta-r)dt+\sg\sqrt{r_t}dW_t,
\end{equation}
where $\ka,\theta, \sg>0$, and $dW_t$ is the increment of the standard Wiener process. For $t<T$ and $r>0$,
the characteristic function
\[
W(t,T; r,\xi)=\bE^{\bQ,r}_t\left[\exp\left(-\int_t^T r_sds\right)e^{i\xi r_T}\right],\ \xi\in \bR,
\]
is of the form
\begin{equation}\label{WCIR}
W(t,T;r,\xi)=\exp[B(\tau,\xi)r+C(\tau,\xi)],
\end{equation}
where $\tau=T-t$, and $B$, $C$ can be found solving the system of Riccati equations associated with the model.
Below, we reproduce
the well-known solution 
in the form convenient for application of the sinh-acceleration method:
\beqa\label{solBCIR2}
B(\tau,\xi)&=&\frac{B_+B_-+i\xi B_{+,n}(\tau)}{B_{++}(\tau)-i\xi},\\\label{solCCIR}
C(\tau,\xi)
&=&\ka\theta\left[B_-\tau+\frac{2}{\sg^{2}}\ln\frac{B_+-B_-}{1-e^{-\tau \sqrt{\ka^2+2\sg^2}}}-\frac{2}{\sg^{2}}\ln(B_{++}(\tau)-i\xi)\right],
\eqa
where $B_\pm=(\ka\pm\sqrt{\ka^2+2\sg^2})/\sg^2$,
\beqa\label{Bpn}
B_{+,n}(\tau)&=&\frac{B_+e^{-\tau \sqrt{\ka^2+2\sg^2}}-B_-}{1-e^{-\tau \sqrt{\ka^2+2\sg^2}}},\\\label{Bpp}
B_{++}(\tau)&=&\frac{B_+-B_-e^{-\tau \sqrt{\ka^2+2\sg^2}}}{1-e^{-\tau \sqrt{\ka^2+2\sg^2}}}.
\eqa

\begin{lem}\label{ChExp}
\begin{enumerate}[a)]
\item The characteristic function is analytic  in $\bC\setminus i(-\infty,-B_{++}(\tau)]$.
\item
As $\xi\to\infty$ remaining in $\bC\setminus i(-\infty,-B_{++}(\tau)]$,
\bbe\label{CIRasymp}
\left|e^{B(\tau,\xi)r +C(\tau,\xi)}\right|\sim C_\infty(\tau,r) |\xi|^{-2\ka\theta/\sg^2},
\ee
where
\bbe\label{CinfCIR}
C_\infty(\tau,r)=\left(\frac{B_+-B_-}{1-e^{-\tau\sqrt{\ka^2+2\sg^2}}}\right)^{2\ka\theta/\sg^2}
\exp\left[r\frac{B_+-B_-e^{\tau\sqrt{\ka^2+2\sg^2}}}{1-e^{\tau\sqrt{\ka^2+2\sg^2}}}+\ka\theta B_-\tau\right].
\ee
\end{enumerate}
\end{lem}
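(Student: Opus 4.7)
The plan is to identify, for part (a), the singularities of $B$ (a rational function) and of $C$ (a logarithm), and then for part (b) to take the limit of $B$ and expand the logarithm in $C$ as $|\xi|\to\infty$.

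For (a), $B(\tau,\xi)$ is a rational function of $\xi$ whose denominator has a single simple zero at $\xi_0 := -iB_{++}(\tau)$, hence $B$ is analytic on $\bC\setminus\{\xi_0\}$. The term $C(\tau,\xi)$ differs from a constant only by the summand $-(2\kappa\theta/\sigma^2)\ln(B_{++}(\tau)-i\xi)$. With the principal branch, $\ln w$ is analytic on $\bC\setminus(-\infty,0]$. Writing $\xi=a+ib$, I would check that $B_{++}(\tau)-i\xi=(B_{++}(\tau)+b)-ia\in(-\infty,0]$ forces $a=0$ and $b\le -B_{++}(\tau)$; that is, the set of $\xi$ where $C$ fails to be analytic is exactly $i(-\infty,-B_{++}(\tau)]$. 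Since this ray also contains the pole $\xi_0$ of $B$ as its endpoint, $W=e^{rB(\tau,\xi)+C(\tau,\xi)}$ is analytic on $\bC\setminus i(-\infty,-B_{++}(\tau)]$.

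For (b), dividing the numerator and denominator of \eqref{solBCIR2} by $i\xi$ yields $B(\tau,\xi)\to -B_{+,n}(\tau)$ as $\xi\to\infty$ in the domain, so $|e^{rB(\tau,\xi)}|\to e^{-rB_{+,n}(\tau)}$. For $C$, I would write
\[
\ln(B_{++}(\tau)-i\xi)=\ln(-i\xi)+\ln\!\left(1-\frac{B_{++}(\tau)}{i\xi}\right)=\ln(-i\xi)+O(|\xi|^{-1}),
\]
and, taking moduli with $|-i\xi|=|\xi|$,
\[
\bigl|e^{C(\tau,\xi)}\bigr|=e^{\kappa\theta B_-\tau}\left(\frac{B_+-B_-}{1-e^{-\tau\sqrt{\kappa^2+2\sigma^2}}}\right)^{2\kappa\theta/\sigma^2}|\xi|^{-2\kappa\theta/\sigma^2}\bigl(1+O(|\xi|^{-1})\bigr).
\]
Multiplying the two asymptotics produces \eqref{CIRasymp} with a prefactor $e^{-rB_{+,n}(\tau)+\kappa\theta B_-\tau}$ times the above power expression.

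The only nontrivial step is to reconcile this prefactor with the form of $C_\infty(\tau,r)$ in \eqref{CinfCIR}, which amounts to verifying the algebraic identity
\[
-B_{+,n}(\tau)\;=\;\frac{B_+-B_-\,e^{\tau\sqrt{\kappa^2+2\sigma^2}}}{1-e^{\tau\sqrt{\kappa^2+2\sigma^2}}},
\]
obtained by multiplying the numerator and denominator of the defining formula \eqref{Bpn} for $B_{+,n}(\tau)$ by $-e^{\tau\sqrt{\kappa^2+2\sigma^2}}$. This is routine algebra and completes the identification of the asymptotic prefactor with $C_\infty(\tau,r)$.
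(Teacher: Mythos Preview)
Your proof is correct and follows essentially the same approach as the paper's own proof: identify the pole of the rational function $B(\tau,\cdot)$ and the branch cut of the logarithm in $C(\tau,\cdot)$ for part (a), then pass to the limit in $B$ and expand the logarithm in $C$ for part (b). Your argument is in fact more explicit than the paper's, which is very terse (it appeals to an auxiliary quantity $\Lambda(\xi)\to 1$ not displayed in the surrounding formulas), and you spell out the algebraic identity $-B_{+,n}(\tau)=\dfrac{B_+-B_-e^{\tau\sqrt{\kappa^2+2\sigma^2}}}{1-e^{\tau\sqrt{\kappa^2+2\sigma^2}}}$ that reconciles the limiting constant with \eqref{CinfCIR}.
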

\begin{proof}
 a) It follows from \eqref{solBCIR2} and \eqref{solCCIR} that $B(\tau,\xi)$ is meromorphic in $\bC$ with the only simple pole at
 the root of the equation
\[
i\xi-B_--(i\xi-B_+)e^{\tau \sqrt{\ka^2+2\sg^2}}=0.
\]
The root is $-B_{++}(\tau)$, and $C(\tau,\xi)$ is analytic in $\bC\setminus i(-\infty, B_{++}(\tau)]$.

 b) As $\xi\to\infty$, $\La(\xi)\to 1$, hence, \eqref{CIRasymp}-\eqref{CinfCIR} follow from \eqref{solBCIR2} and \eqref{CinfCIR}.
 \end{proof}
\subsection{The bond price} We let $\xi=0$ in the formula for the characteristic function
\[
P(T, r)=\exp[B(T,0)r+C(T,0)].
\]
Since $\La(0)=B_+/B_-$, we have
\beqast
B(T,0)&=& \frac{B_+-B_-(B_+/B_-)e^{T \sqrt{\ka^2+2\sg^2}}}{1-(B_+/B_-)e^{T \sqrt{\ka^2+2\sg^2}}}
=
B_-\frac{e^{T \sqrt{\ka^2+2\sg^2}}-1}{e^{T \sqrt{\ka^2+2\sg^2}}-B_-/B_+}\\&=&B_-\frac{1-e^{-T\sqrt{\ka^2+2\sg^2}}}
{1-(B_-/B_+) e^{-T\sqrt{\ka^2+2\sg^2}}},\\
C(T,0)
&=&\ka\theta\left[B_-T+2\sg^{-2}\ln\frac{1-B_-/B_+}{1-(B_-/B_+)e^{-T \sqrt{\ka^2+2\sg^2}}}\right].
\eqast
\subsection{Call and put options}
Consider now the call option with the maturity date $\tau$ and strike $K<e^{C(T,0)}$, on the bond maturing at $T+\tau$.
Set $z_{T,K}=(C(T,0)-\ln K)/B(T,0)$. Since $B(T,0)<0$, the Fourier transform
of option's payoff
\beqast
G(\xi)&=&\int_\bR e^{-ir\xi}(e^{B(T,0)r+C(T,0)}-K)_+dr
=\frac{KB(T,0)e^{iz_{T,K}\xi}}{\xi(\xi+iB(T,0))},
\eqast
is well-defined in the half-plane $\{\Im\xi>-B(T,0)\}$, and admits meromorphic
continuation to the complex plane with two simple poles at 0 and $-iB(T,0)$. The characteristic function
$\exp[B(\tau,\xi)r+C(\tau,\xi)]$ of $r_\tau\vert r$ admits analytic continuation to the complex plane with the cut
$i(-\infty, -B_{++}(\tau)]$, and decays as $|\xi|^{-2\ka\theta/\sg^2}$ as $\xi\to\infty$ in the complex plane with the  cut.

Hence,  the price of the call option on the bond can be calculated as
\bbe\label{bondcallprice}
V_{\mathrm{call}}(\tau, r)=\frac{KB(T,0)}{2\pi}\int_{\Im\xi=\om_0}\frac{e^{iz_{T,K}\xi+B(\tau,\xi)r+C(\tau,\xi)}}{\xi(\xi+iB(T,0))}d\xi,
\ee
where $\om_0>-B(T,0)$ is arbitrary. The integrand  is meromorphic in $\bC\setminus i(-\infty, -B_{++}(\tau)]$, with two simple poles at 0 and $-iB(T,0)$,
hence, we have 3 strips of the analyticity $S_{(-B_{++},0)}$, $S_{(0,-B(T,0))}$ and $S_{(-B(T,0),+\infty)}$
of the integrand.
We can move the line of integration into any strip.
On crossing one or two poles, we apply the residue theorem.
When both poles are crossed, we obtain the price of the put. Thus,
\bbe\label{putcall}
V_{\mathrm{call}}(\tau, r)=V_{\mathrm{put}}(\tau, r)+e^{z_{T,K}B(T,0)}P(T+\tau,r)-KP(\tau,r),
\ee
which can be used to  double-check the accuracy
of calculations. Indeed, if $V_{\mathrm{call}}(\tau, r)$ and $V_{\mathrm{put}}(\tau, r)$ are calculated directly, with no pole crossed,
then a random agreement between the two completely differently sums, with the difference of the order of E-12,
 say, has a negligible probability unless the errors of both results are of the same order of magnitude.

 Let $S_{(\mum,\mup)}$ be the chosen strip. According to the general scheme, the choice of the parameters of the sinh-acceleration
 depends on the sign of $z_{T,K}$.
 If $z_{T,K}=0$, we can apply the sinh-acceleration with $\gam=-\pi/2, \gap=\pi/2$;
if $z_{T,K}>0$ with $\gam=0, \gap=\pi/2$; and if $z_{T,K}<0$, with $\gam=-\pi/2, \gap=0$. If $z_{T,K}\ge 0$, then,
for any $r\ge 0$, $\exp[B(T,0)r+C(T,0)]\le K$, hence, the price of the call option is 0, and it is unnecessary to apply
a numerical method to recover this zero. However, to test the accuracy of the method, we applied our general recommendations to
this case as well, and, in numerical examples, verified that the call option price calculated numerically is of the order of the
error tolerance used to choose the parameters of the scheme.

We define $\om$ and $d_0$ by \eqref{omd0},
and $\om_1$ and $b_0$ by \eqref{om1b0}. We choose $d<d_0$, $b<b_0$ close to $d_0, b_0$, respectively,
e.g., $d=0.95 d_0, b=0.95d_0$.
Then, after the change of variables \eqref{sinhbasic}, the integrand in the pricing formula,
denote it $f(y)$, admits analytic continuation to the strip $S_{(-d,d)}$ and decays sufficiently fast as $y\to\infty$ remaining
in the strip. The Hardy norm \eqref{Hnorm} is finite and can be approximated well by \eqref{hfdbound}.
To satisfy a small error tolerance $\eps>0$,
we choose $\ze =2\pi d/(\ln(H(f,d)/\eps)\sim 2\pi d/E$, where $E=\ln(1/\eps)$.

 The truncation error of the simplified trapezoid rule
can be approximated by the truncation error of the integral \eqref{truncerror}.
The rate of decay of the integrand is the smallest if $z_{T,K}=0$. In this case, the integrand decays as
\[
\frac{K(-B(T,0))C_\infty(\tau,r)}{2\pi}|\xi|^{-2-2\ka\theta/\sg^2}\]
(see  \eqref{CIRasymp}-\eqref{CinfCIR}), therefore, given the error tolerance $\eps$, we can find $\La_1=b e^{\La}/2$ from
\[
\frac{K(-B(T,0))C_\infty(\tau,r)}{\pi}\int_{\La_1} y^{-2-2\ka\theta/\sg^2}dy=\eps,\]
which gives $\La_1=\eps_1^{-1/(1+2\ka\theta/\sg^2)}$, where $\eps_1=\eps(1+2\ka\theta/\sg^2)/(K(-B(T,0))C_\infty(\tau,r))$.
Thus,
\[
\La=\ln (2/b)+(1+2\ka\theta/\sg^2)^{-1}\ln(1/\eps_1),\ N=\mathrm{ceil} (\La/\ze).\]

 If $z_{T,K}\neq 0$, then $|\om|=\pi/4$, and $\Re (iz_{T,K}\xi)\sim -c_\infty(T,K)|\xi|$ along the contour
 $\xi=\chi_{\om_1,\om,b}(\bR)$, where $c_\infty(T,K)=|z_{T,K}\sin\om |$. Hence, we need to find $\La_1=b e^{\La}/2$ satisfying
\[
\frac{K(-B(T,0))C_\infty(\tau,r)}{\pi}\int_{\La_1} e^{-c_\infty(T,K)y}y^{-2-2\ka\theta/\sg^2}dy\le \eps.\]
We solve a stronger equation
\[
e^{-c_\infty(T,K)\La_1}\La_1^{-1-2\ka\theta/\sg^2}=\eps_1,
\]
equivalently,
\[
F(\La_1):=c_\infty(T,K)\La_1+(1+2\ka\theta/\sg^2)\ln \La_1-\ln(1/\eps_1)=0,
\]
as follows: $\La_1=(1/c_\infty(T,K))\ln(1/\eps_1)$,
\[
\La_1:=\La_1-\frac{1+2\ka\theta/\sg^2}{c_\infty(T,K)}\ln \La_1.
\]
Finally, we calculate $\La=\ln(2\La_1/b)$, $N=\mathrm{ceil} (\La/\ze).$

\begin{rem}{\rm
\begin{enumerate}
\item
If $z_{T,K}\neq 0$ (and not too small in absolute value), then the rate of decay is, essentially, as in the case of the call option
in regular SINH-models of order $\nu\in (0,1)$ and non-zero $x'$. However, if $z_{T,K}$ is zero or close to zero, then
the number of terms needed to satisfy a given error tolerance can be larger - and very large if the sinh-acceleration is not used.
Even the fractional-parabolic deformation requires 10 times more terms (for some parameters, even more) to achieve the same accuracy.
\item
Formally, one should use the widest strip $S_{(-B(T,0),+\infty)}$ and choose an arbitrary large $\om_0>-B(T,0)$. However, if $\om_0$
is large, then the simplified general recommendations for the choice of $\ze$ and, especially, $\La$ can become too inaccurate. Indeed,
if  $z_{T,K}<0$, then the wings of the curve $\cL_{\om_1,\om; b}:=\chi_{\om_1, \om; b}(\bR)$ point downward,
and we can truncate the sum in the infinite trapezoid rule where the integrand becomes sufficiently small.
However, if $\om_0>0$ is not small, a significant number of points $\xi_j=\chi_{\om_1,\om;b}(y_j)=i\om_1+b\sinh(i\om +y_j)$ used in the simplified trapezoid rule
are in the upper half-plane but the simplified recommendation implicitly presumes that all the points are in the low half-plane.

Thus, if $\om_0>0$ and $z_{T,K}<0$, we need to modify the prescription above by adding to $\La$ the half-length $\La_0$ of the intersection of the curve
$\cL_{\om_1,\om; b}$ with the upper half-plane.  To find $\La_0$, we solve the equality
$
\om_1+b\Im\sinh(i\om+y)=0$, equivalently, $e^y-e^{-y}+2\om_1/(b\sin\om)=0$. Thus,
\[
\La_0=-\om_1/(b\sin\om)+\sqrt{(\om_1/b\sin\om)^2+1)},
\]
and $\La=\ln(2\La_1/b)+\La_0$, $N=\mathrm{ceil} (\La/\ze).$ This increases the number of terms. Hence, it is advisable
 to choose $\om_0\in (-B_{++},0)$ unless $B_{++}$ is very small.
\end{enumerate}
}
\end{rem}

\subsection{Numerical example}  Table 10 in Section \ref{CIRTable} demonstrates that
the sinh-acceleration is significantly faster than the fractional-parabolic method (the number of terms is 10-30 times fewer
and the CPU time in Matlab realization is about 5 times smaller); the flat iFT can satisfy the error tolerance $10$ mln times larger
only when the number of terms is of the order of $10^5$, and the CPU time is 100 times larger.

\section{Subordination}\label{subord}
We consider the following example. Let $y_t$ be the square root process
with the dynamics
\bbe\label{CIRdyn}
dy_t=\ka(\theta-y_t)dt+\la \sqrt{y_t}dW_t,
\ee
where $\ka>0, \la>0, \theta>0$ and $dW_t$ is the increment of the standard Wiener process. A popular subordinator
$
Y_t=\int_0^t y_s ds
$
conditioned on $y_0$ has the characteristic function
\beqa\nonumber
\Phi_{\mathrm{CIR}}(t, y_0; \eta)&:=&\bE\left[ e^{i\xi Y_t}\ |\ Y_0=y_0\right]\\\label{chfYt}
&=&\frac{\exp(\ka^2\theta t/\la^2)\exp(2y_0i\eta/(\ka+u(\eta)\coth(u(\eta) t/2)))}
{\left[\cosh(u(\eta) t/2)+\ka \sinh(u(\eta) t/2)/u(\eta)\right]^{2\ka\theta/\la^2}},
\eqa
where $u(\eta)=\sqrt{\ka^2-2\la^2i\eta}$. The pdf of $Y_t | Y_0=y_0$ is
\bbe\label{pdfCIR}
p(y_0;\tau)=\frac{1}{2\pi}\int_\bR e^{-i\tau\eta}\Phi_{\mathrm{CIR}}(t, y_0; \eta)d\eta.
\ee
Since $\tau>0$, and $\Phi_{\mathrm{CIR}}(t, y_0; \eta)$ is uniformly bounded on the domain of analyticity, we must 
use a cone in the lower half plane $\Im\eta<0$; $u$ being analytic in the complex plane with the cut
$i(-\infty,-\ka^2/(2\la^2)]$, the simplest choice is $\gam=-\pi/2, \gap=0$, $\om=-\pi/4$, $d_0=\pi/4$.
\begin{lem}\label{lem:CIRsub}
\begin{enumerate}[a)]
\item
For $\eta\in \bC\setminus i(-\infty,-\ka^2/(2\la^2)]$,  $\Re u(\eta)>0$.
\item
Fucntion
$\bR\ni\eta\mapsto \Phi_{\mathrm{CIR}}(t, y_0; \eta)\in\bC$ admits analytic continuation to $\bC\setminus i(-\infty,-\ka^2/(2\la^2)]$.
\end{enumerate}
\end{lem}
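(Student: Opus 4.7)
The plan is as follows. For part (a), I would compute directly: writing $\eta=x+iy$, one has $\ka^2-2\la^2 i\eta = (\ka^2+2\la^2 y)-2i\la^2 x$, which lies in $(-\infty,0]$ exactly when $x=0$ and $y\le -\ka^2/(2\la^2)$, i.e.\ precisely on the cut. Off the cut, $\ka^2-2\la^2 i\eta$ avoids the non-positive real axis, so the principal branch of the square root is analytic there and takes values in the right half-plane, yielding $\Re u(\eta)>0$.

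The guiding observation for part (b) is that the formula \eqref{chfYt} is less singular than its $u$-notation suggests: $\cosh(z)$ and $\sinh(z)/z$ are \emph{even} entire functions of $z$, hence entire functions of $z^2$. Setting $z=ut/2$, the quantities $\cosh(ut/2)$ and $\sinh(ut/2)/u$ are therefore entire functions of $u^2=\ka^2-2\la^2i\eta$, hence entire in $\eta$ with no branch ambiguity inherited from $u$. Thus the base $\cosh(ut/2)+\ka\sinh(ut/2)/u$ of the power is automatically entire in $\eta$, and the task reduces to verifying non-vanishing of this base and of the denominator $\ka+u\coth(ut/2)$ in the exponent.

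The main obstacle, and the step I expect to require the most care, is precisely this non-vanishing. Using part (a), I would rewrite
\[
\cosh(ut/2)+\ka\sinh(ut/2)/u=\frac{1}{2u}\bigl[(u+\ka)e^{ut/2}+(u-\ka)e^{-ut/2}\bigr],
\]
and observe that $\Re u>0$ together with $\ka>0$ forces $|u+\ka|^2-|u-\ka|^2=4\ka\Re u>0$, while $|e^{ut/2}|>|e^{-ut/2}|$; the two summands therefore have different moduli and cannot cancel. Because $\bC\setminus i(-\infty,-\ka^2/(2\la^2)]$ is simply connected, analytic continuation of $\log$ then defines $[\cosh(ut/2)+\ka\sinh(ut/2)/u]^{2\ka\theta/\la^2}$ as a single-valued analytic function, normalized by its positive real value at $\eta=0$ (where $u=\ka$).

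For the exponential factor I would use the identity
\[
\ka+u\coth(ut/2)=\frac{u\bigl[\cosh(ut/2)+\ka\sinh(ut/2)/u\bigr]}{\sinh(ut/2)},
\]
whose numerator is nonzero by the preceding paragraph combined with $u\ne 0$. The zeros of $\sinh(ut/2)$, located at $u=2\pi ik/t$ for $k\in\bZ$, correspond to $\eta=-i(\ka^2+4\pi^2 k^2/t^2)/(2\la^2)$, all lying on the excluded cut; hence the denominator in the exponent is nonzero throughout the domain. Combining these analyses gives analyticity of $\Phi_{\mathrm{CIR}}(t,y_0;\eta)$ on $\bC\setminus i(-\infty,-\ka^2/(2\la^2)]$.
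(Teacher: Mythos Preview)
Your proof is correct and differs from the paper's at the key non-vanishing step for part~(b). The paper sets $w=e^{-u t/2}$ and factors the base of the power as
\[
\cosh(ut/2)+\ka\,\frac{\sinh(ut/2)}{u}=\frac{e^{ut/2}}{2u}\,(1+w^2)\Bigl(u+\ka\,\frac{1-w^2}{1+w^2}\Bigr),
\]
then argues that each factor lies in the open right half-plane: since $\Re u>0$ one has $w^2$ in the open unit disc $\cD$, whence $\Re(1+w^2)>0$, and the M\"obius computation $\Re\tfrac{1-z}{1+z}=\tfrac{1-|z|^2}{|1+z|^2}>0$ for $|z|<1$ places the last bracket in the right half-plane as well; principal-branch powers of each factor are then separately analytic. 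Your modulus comparison $|(u+\ka)e^{ut/2}|>|(u-\ka)e^{-ut/2}|$ is equally short and arguably more elementary, and invoking simple connectedness to build one global logarithm of the base is cleaner than tracking three separate branches. You also make explicit why $\ka+u\coth(ut/2)\neq 0$ via the identity linking it to the base; the paper treats only the denominator of \eqref{chfYt} and leaves the analyticity of the exponential factor implicit. Your evenness remark (that the base is in fact entire in $\eta$) is correct and pleasant but not strictly needed here, since on the slit plane $u$ itself is already analytic by part~(a).
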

\begin{proof} a) is evident. b) We set $\ga=2\la\theta/\la^2$, $w=w(\eta)=e^{-u(\eta) t/2}$, and rewrite the denominator on
the RHS of \eqref{chfYt} as
\beqa\label{acontsqbr}
&&
\left[\cosh(u(\eta) t/2)+\ka \sinh(u(\eta) t/2)/u(\eta)\right]^{2\ka\theta/\la^2}
\\\nonumber
&=&\frac{e^{\ga u(\eta) t/2}}{(2u(\eta))^\ga}(1+w^2)^\ga\left(u(\eta)+\ka\frac{1-w^2}{1+w^2}\right)^\ga.
\eqa
Since $\Re u(\eta)>0$,  $w_1:=w(\eta)^2=e^{-u(\eta) t/2}$ belongs to the unit open disc $\cD$, hence,
the fraction on the RHS of \eqref{acontsqbr} and the factor $(1+w^2)^\ga$ are well-defined analytic functions
on  $\bC\setminus i(-\infty,-\ka^2/(2\la^2)]$. To prove that the same statement holds for the last factor, it suffices to
show that $(1-w^2)/(1+w^2)$ is in the right half-plane if $w^2\in \cD$. Let $w^2=a+ib$, where $a^2+b^2<1$. Then
\[
\Re \frac{1-w^2}{1+w^2}=\Re\frac{(1-a)-ib}{1+a+ib}=\Re\frac{((1-a)-ib)(1+a-ib)}{(1+a)^2+b^2}=\frac{1-a^2-b^2}{(1+a)^2+b^2}>0.
\]
\end{proof}
The pricing formula for European options in the CIR-subordinated L\'evy models changes as follows.
Instead of the expectation $\bE[e^{i\xi X_\tau}\ |\ X_0=x] = e^{ix\xi-\tau\psi(\xi)}$, we have the expectation
$\bE[e^{i\xi X_{Y_\tau}}\ |\ X_0=x, Y_0=y_0] = e^{ix\xi}\Phi_{\mathrm{CIR}}(\tau, y_0; i\psi(\xi))$. Hence,
\bbe\label{EuroCIR}
V_{\mathrm{call}}(S,K; y_0,\tau)=-\frac{Ke^{-r\tau}}{2\pi}
\int_{\Im\xi=\om_0}\frac{e^{ix\xi}\Phi_{\mathrm{CIR}}(\tau, y_0; i\psi(\xi))}{\xi(\xi+i)}d\xi,
\ee
where $x=\ln(S/K)$ and $\om_0<-1$ is such that, for all $\xi$ on the line $\Im\xi=\om_0$,  $i\psi(\xi)$ is in the complex plane with the cut
$i(-\infty,-\ka^2/(2\la^2)]$, equivalently, $\psi(\xi)$ is in the complex plane with the cut $(-\infty, -\ka^2/(2\la^2)]$.
This implies that the CIR subordinator must satisfy the condition $\ka^2/(2\la^2)>-\psi(-i)$.

Applying the sinh-acceleration, we need to choose the parameters of the scheme so that
\begin{enumerate}[(1)]
\item
the image of the strip $S_{(-d_0,d_0)}$ under the composition $y\mapsto \psi(\chi_{\om_1; \om, b}(y))$
 belongs to the complex plane with the cut $(-\infty, -\ka^2/(2\la^2)]$ or to an appropriate Riemann surface;
\item
if $x>0$, then $\chi_{\om_1; \om, b}(S_{(-d_0,d_0)})$ must be a subset of a half-plane of the form $\Im\xi>a$, where $a\in\bR$ is a constant;
\item
if $x<0$, then $\chi_{\om_1; \om, b}(S_{(-d_0,d_0)})$ must be a subset of a half-plane of the form $\Im\xi<a$, where $a\in\bR$ is a constant.
\end{enumerate}

Furthermore, in cases when $\Phi_{\mathrm{CIR}}(\tau, y_0; i\psi(\xi))$ admits analytic continuation to an appropriate Riemann surface,
it is advisable to choose the parameters of the sinh-acceleration so that the deformed contour $\cL_{\om_1;\om, b}=\chi_{\om_1;\om,b}(\bR)$ remains
in the complex plane and additional operations in the program caused by appropriate phase shifts are not needed. If the deformed contour crosses a
cut, the number
of terms in the simplified trapezoid rule decreases somewhat but the number of elementary operations needed to evaluate individual terms
increases. The total gain is small if any.

Let $X$ be an elliptic sinh-regular
process of type $((\lm, \lp); \cC, \cC_+)$ and order $\nu\in (0,2]$, where
and $\lm<0<\lp$; furthermore, as $\xi\to \infty$ remaining in the conus $\cC$,
\[
\psi^0(\xi)\sim c_\infty e^{i\varphi\nu}|\xi|^\nu,\ \xi\to+\infty,
\]
where $\varphi=\mathrm{arg}\, \xi$, and $c_\infty>0$.

First, we find a strip $S_{(\mum,\mup)}$ of analyticity of $\Phi_{\mathrm{CIR}}(t, y_0; i\psi(\xi))$. Here
 $\mum<0<\mup$ are such that $\ka^2+2\la^2\psi(i\mu)>0$ for all $\mu \in (\mum,\mup)$.
Since $\mu\mapsto \psi(i\mu)$ is convex on $(\lm,\lp)$ and $\psi(0)=0$, we conclude that  if
$\ka^2+2\la^2\psi(i(\lp-0))\ge 0$, then $\mup=\lp$, otherwise $\mup$ is the only positive solution of the equation
$\ka^2+2\la^2\psi(i\mu)=0$. Similarly, if
$\ka^2+2\la^2\psi(i(\lm+0))\ge 0$, then $\mum=\lm$, otherwise $\mum$ is the only positive solution of the equation
$\ka^2+2\la^2\psi(i\mu)=0$.

Next, we need to find a conus of analyticity, and calculate the asymptotics of $\Phi_{\mathrm{CIR}}(t, y_0; i\psi(\xi))$
as $\xi\to \infty$ remaining in the conus. We consider two cases.
\sbr
\noindent
{\em Case Ia.} Let $\nu\in (1,2]$ or $\nu\in (0,1)$ and $\mu=0$. Then $\psi(\xi)=i\mu\xi+\psi^0(\xi)$ enjoys the same properties as $\psi^0$, and
\[
\psi(\xi)\sim c_\infty e^{i\varphi\nu}|\xi|^\nu, \ \xi\to+\infty.
\]
\sbr
\noindent
{\em Case Ib.} Let  $\nu=1$. Then the asymptotic coefficient and argument depend on $\mu$:
\[
\psi(\xi)\sim c_\infty(\mu) e^{i(\varphi+\ga(\mu))}|\xi|,\ \xi\to+\infty.
\]
Thus, there exist $-\pi\le \gam<0<\gap\le  \pi$
  such that for any $\varphi\in (\gam,\gap)$,
\[
\psi(\rho e^{i\varphi})\sim c_\infty(\varphi)\rho^\nu,\ \rho\to+\infty,\]
where $c_\infty(\varphi)\not\in (-\infty,0]$. Hence, for any $\varphi\in (\gam,\gap)$,
\[
(\ka^2+2\la^2\rho e^{i\varphi})^{1/2}\sim c_\infty(\varphi)^{1/2}\rho^{\nu/2}, \ \rho\to+\infty,
\]
where $\Re c_\infty(\varphi)^{1/2}>0$.

The argument above ``almost" proves that $(\ka^2+2\la^2\psi(\xi))^{1/2}$ admits analytic continuation to
$i(\mum,\mup)+\cC_{\gam,\gap}$. We say almost because the proof above demonstrates that, for
$\xi\in i(\mum,\mup)+ \cC_{\gam,\gap}$, $\ka^2+2\la^2\psi(\xi)\not\in (-\infty,0]$ if $\xi$ is in a certain neighborhood of 0 and
a certain neighborhood of infinity. For NTS and KoBoL of order $\nu\in [1,2]$, one can demonstrate that the image
of  $i(\mum,\mup)+ \cC_{\gam,\gap}$ under the map $\xi\mapsto \ka^2+2\la^2\psi(\xi)$ does not intersect $(-\infty, 0]$;
in general case, one should study the image on the case-by-case basis, and, if necessary, use $\mu_\pm$ closer to 0.
Note that it suffices to ensure that the image does not intersect the essentially singular point 0, which is a much weaker condition. If the image intersects $(-\infty,0)$ but does not contain 0, 
the image is a subset of an appropriate Riemann surface, and a larger $\ze$ can be chosen. However, it is advantageous to choose the parameters
of the sinh-acceleration so that the image of the deformed contour under the map $\xi\mapsto \ka^2+2\la^2\psi(\xi)$ is a subset
of the complex plane, and there is no need to introduce phase shifts in the pricing formula, when the cut is crossed.

Once $i(\mum,\mup)+ \cC_{\gam,\gap}$ is found, we define the parameters of the sinh-acceleration and choose $\ze$ for
the given error tolerance using the general prescriptions. It remains to find the truncation parameter and $N$.
If follows from \eqref{LaN} that as $\xi=\rho e^{i\varphi}\to \infty$ remaining in $\cC_{\gam, \gap}$,
\[
|\Phi_{\mathrm{CIR}}(t, y_0; i\psi(\xi))|\le (1+o(1))\exp(\ka^2\theta t/\la^2-B(\varphi)\rho^{\nu/2}).
\]
where
\[
B(\varphi)=\sqrt{2}\la \Re(c_\infty(\varphi\nu)^{1/2})(t/2)2\ka\theta/\la^2=\sqrt{2} \Re(c_\infty(\varphi\nu)^{1/2})t\ka\theta/\la.
\]
Hence, we can find the truncation parameter $\La=\ln(2\La_1/b)$ solving approximately the equality
\[
e^{-A(\om,x)\La_1-B(\om)\La_1^{\nu/2}}\rho^{-1}=\eps_1,\]
where $A(\om,x)=|x\sin\om|$, $\eps_1=\eps\pi\exp(r\tau-\ka^2\theta t/\la^2)/K$, as follows. 

If $\nu=1$, then
\[
\La_1:=\ln(1/\eps_1)/(A(\om)+B(\om)),\ \La_1:=\max\{2, \La_1-\ln\La_1/(A(\om)+B(\om))\};\]
if $F:=(A(\om,x)+B(\om))\La_1+\ln\La_1-\ln(1/\eps_1)<0$, then
\[
\La_1:=\La_1-F/(A(\om)+B(\om)+1/\La_1).
\]
If $\nu\in (1,2]$ or $\nu\in (0,1)$ and $\mu=0$, then we set
\[
\La_1=(\ln (1/\eps_1)/B(\om)^{2/\nu},\ \La_1=\left(\frac{\max\{2, \ln(1/\eps_1)-A(\om)\La_1-\ln\La_1\}}{B(\om)}\right)^{2/\nu};\]
if $F:=A(\om,x)\La+B(\om)\La_1^{\nu/2}+\ln\La_1-\ln(1/\eps_1)<0$, then $\La_1:=\La_1-F/DF$, where
\[
DF:=A(\om,x)+(\nu/2)B(\om)\La^{\nu/2-1}+1/\La_1.
\]
Note that if $B(\om)$ is very small and $|x|$ is not very small, then it is safer to use the same rule as in the case $\nu=1$;
this  leads to a moderate overkill.

\sbr
\noindent
{\em Case II.} $\nu\in (0,1)$ and $\mu\neq 0$. If $x>0$, we take arbitrary $0<\gam<\gap<\pi/2$; as $\xi\to \infty$ remaining in $\cC_{\gam,\gap}$,
$\ka^2-2\la^2 i\psi(\xi)\sim \ka^2-2\la^2\mu\xi$, hence, if $\mu>0$, then
\[
u(i\psi(\rho e^{i\varphi}))\sim e^{i((\varphi-\pi/2)}\rho^{1/2},\ \rho\to +\infty,
\]
and if $\mu<0$, then
\[
u(i\psi(\rho e^{i\varphi}))\sim e^{i(\varphi/2)}\rho^{1/2},\ \rho\to +\infty.
\]
If $\varphi\in (0,\pi)$, $\cos (\varphi-\pi/2)>0$ and $\cos (\varphi/2)>0$. Hence, if $\nu\in (0,1)$ and $\mu\neq 0$,
the rate of the decay of the integrand is as in the case $\nu=1$ but the asymptotic coefficient is different.
We leave the details to the reader.

The results of a numerical experiment can be found in Table 11.

\section{Quantiles and  Monte-Carlo simulations}\label{s:MC}
\subsection{One-factor KoBoL}\label{ss:MC1KBL}
We consider evaluation of quantiles, that is, solution
of the equation $F(x)=A$, where $A\in (0,1)$ and $F$ is the cumulative
distribution function; once an efficient procedure for the quantile evaluation is available,
the procedure can be used in Monte-Carlo simulations.

Monte-Carlo simulation remains to this day the most universal method of pricing path-dependent options on financial assets. In the case of L\'evy-driven models, a basic building block of any Monte-Carlo method is the simulation of an increment of the underlying L\'evy process. In some situations --- for instance, the Variance Gamma model \cite{MS90,MM91,MCC98} --- the process can be expressed in terms of simpler ones using time subordination, and hence its increments can be simulated (almost) exactly. However, in other cases no exact simulation algorithm is known.  Madan and Yor \cite{madan-yor} proposed an algorithm for simulating KoBoL increments based on representing the process as Brownian motion subordinated by a stable L\'evy process, truncating the small jumps of the subordinator and replacing them with their average. However, as extensive numerical examples in \cite{MCMityaLevy}
demonstrate, an efficient implementation of the standard approach described below is 10-100 faster; the variation which we
introduce below, is much faster and more accurate than the realization in \cite{MCMityaLevy}.

If $Z$ is any random variable with continuous distribution, one can simulate $Z$ sampling a uniformly distributed random variable $U$ on $(0,1)$ and calculate $F^{-1}(U)$, where $F$ denotes the cumulative distribution function of $Z$. When an explicit formula for $F^{-1}$ is not available, it becomes important to be able to calculate its values very quickly and sufficiently accurately. A straightforward approach that was used with a limited success for simulation
of stable L\'evy processes (the tails of the distributions decay too slowly, hence, the Monte-Carlo simulations are moderately efficient only
if the index of the process is close to 2, and the distribution does not differ much from the normal distribution, with the exception
of far parts of the tails, which can be safely disregarded in this case) is as follows. One tabulates
the values of $F$ on a sufficiently long and fine grid of points $x_0,x_1,\dotsc,x_M$ on the real line and approximates $F^{-1}$ using linear interpolation.
This method is very attractive from the practical viewpoint, because the values $F(x_i)$ only have to be calculated once, and afterward the computational cost of each simulation of $Z$ is extremely low: one has to draw a sample $A$ of $U$, find $j$ satisfying $F(x_j)\leq A<F(x_{j+1})$ (which requires about $\log_2(M)$ comparisons)
and perform 4--5 arithmetic operations required for linear interpolation:
\begin{equation}\label{linintnon-log}
x=x_j+(x_{j+1}-x_j)(A-F_j)/(F_{j+1}-F_j).
\end{equation}
If $A<F_1$, one assigns $x=x_1$, and if $A>F_M$, then
one assigns $x=x_M$.
In many important examples, an explicit formula is known for the characteristic function of the random variable $Z$.
In such a case, the calculation of the values $F(x_i)$ reduces to computing certain inverse Fourier transforms
(see Glasserman and Liu \cite{glass-liu-07,glass-liu-10}). In the case of L\'evy processes with exponentially decaying tails,
the problem of a slow decay is less serious than in the case of stable L\'evy processes unless the exponential rate is too small but the peak of the probability distribution remains very high if the order of KoBoL is close to $0$.

In the application to the Monte-Carlo simulations, this method  has 3 sources of errors:
\begin{enumerate}[(1)]
\item
truncation error;
\item
errors of linear interpolation;
\item
errors of evaluation of $F_k$.
\end{enumerate}
A popular approach (see, e.g., \cite{glass-liu-07,GlassermanLiu07,glass-liu-10,ChenFengLin,CGMYsim,feng-lin11}) is to use either the fast Fourier transform (FFT) or
fast Hilbert transform (fast HT), which allows one to calculate the values $F_k$ at
all points of interest faster than point-by-point, especially if the number of points is large.
This approach faces the following fundamental difficulties:
\begin{enumerate}[(a)]
\item
if the time step is small, which is necessary for accurate simulation of barrier options
with continuous monitoring, then, in a neighborhood of $x'=0$, the derivative $F'(x)=p(x)$ is very large,
hence, in order that the linear interpolation \eqref{linintnon-log} be sufficiently accurate,
the size of the mesh $\Delta x=x_{j+1}-x_j$ must be very small; in fact, even in a relatively
nice case  $\nu\in (0.5, 1)$, $\Delta x=10^{-5}$ can lead to interpolation errors
greater than $10^{-8}$ (see an example below); for $\nu$ closer to 0, much smaller $\Delta x$
can be insufficient;
\item
if one of the steepness parameters $\la_\pm$ is small in
absolute value, e.g., $\lp>0$ is small, then $x_1$ must be negative and very large
in absolute value to ensure that the truncation error in the neighborhood of $-\infty$
is sufficiently small. In view of difficulty (a), the total number of points can be
measured in millions and dozens of million;
\item
as examples in Section 2 demonstrate, accurate evaluation of $F(x_k)$ for $x_k$ large
in absolute value can be either too time consuming or virtually impossible if the flat FT is used;
the same difficulties arise is HT is used.
\end{enumerate}
The sinh-acceleration allows us to calculate $F_k=F(x_k)$ very accurately and fast;
for $x_k$ large in absolute value, especially fast. The fractional-parabolic method used
in \cite{MCMityaLevy} to calculate $F_k$ is faster and more accurate than FFT and HT-based methods
but, after the fractional-parabolic change of variables the number of terms $N$ in the simplified trapezoid rule depends on $x_k$ much stronger than
after the sinh-acceleration. When the latter is applied,
given the error tolerance,
 the parameters of the sinh-acceleration procedure
bar the number of terms $N$ can be chosen the same for all $x'\le 0$; similarly, for $x'\ge 0$. Furthermore, $N$
decreases as $x'$ increases in the absolute value. Hence, the parameters of the sinh-acceleration and $\ze, N$ can be chosen
for $x_1$ (resp., for $x_M$), and used to evaluate $p(x), p'(x), F(x)$ for all $x\le x_1$ (resp., for all $x\ge x_M$).
If $F^{-1}(a)<x_1$, we  calculate $x=F^{-1}(a)$ using the Newton method, and 2-3 iterations suffice
to satisfy the error tolerance $10^{-12}$ and less if the initial approximation $x_1<0$ is not small in absolute value.
Similarly, if $F^{-1}(a)>x_M$, we  calculate $x=F^{-1}(a)$ using the Newton method, and 2-3 iterations suffice
to satisfy the error tolerance $10^{-12}$ and less if the initial approximation $x_M>0$ is not small.
There is no need to truncate the state space.

To apply the Newton method
\begin{equation}\label{linN}
x_{n+1}=x_n-(F(x_n)-A)/F'(x_n),
\end{equation}
one has to evaluate the pdf $p_n=F'(x_n)$ as well but the sinh-acceleration
method allows one to calculate both $F(x_n)$ and $F'(x_n)$ using the same
parameters of the numerical scheme; moreover, only one step of calculations
is different: in the case of $F(x_n)$, we have an additional factor $1/(-i\xi)$.
We can simultaneously calculate $F^{\prime\prime}(x_n)$ inserting the factor $-i\xi$ instead of $1/(-i\xi)$;
this allows us to use the second order approximation
\[
F(x)=F(x_n)+(x-x_n)F'(x_n)+\frac{(x-x_n)^2}{2}F^{\prime\prime}(x_n)\]
to solve $F(x)-A=0$ on a sufficiently small interval $[x_{n-1}, x_n]$:
\begin{equation}\label{QT}
x=x_n-\frac{2(F(x_n)-A)}{F'(x_n)+\sqrt{F'(x_n)^2-2(F(x_n)-A)F^{\prime\prime}(x_n)}}.
\end{equation}
In general, we can calculate $F(x_n), F'(x_n), F^{\prime\prime}(x_n)$
for $x_n$ over a wide interval using the same parameters of the scheme. This implies that
the bulk of the CPU time is spent on calculations of the parameters of the scheme,
and the arrays $\xi_k=i\om_1+b\sinh(i\om+y_k)$ and $Exp_k:=\exp[-t\psi(i\om_1+b\sinh(i\om+y_k))]\cosh(i\om+y_k)$,
$k=1,2,\ldots, N$. The last step is the fast and straightforward evaluation
of the quantities
\begin{eqnarray}\label{F}
F(x_n)&=&\frac{ib\ze}{\pi}\sum_{j=0}^N (1-\de_{j0}/2)\Re \frac{\exp[-ix'_n\xi_j] Exp_j}{\xi_j}\\\label{Fp}
F'(x_n)&=&\frac{b\ze}{\pi}\sum_{j=0}^N (1-\de_{j0}/2)\Re \exp[-ix'_n\xi_j] Exp_j\\\label{Fpp}
F^{\prime\prime}(x_n)&=&\frac{ib\ze}{\pi}\sum_{j=0}^N (1-\de_{j0}/2)\Re \exp[-ix'_n\xi_j]\xi_j Exp_j.
\end{eqnarray}
For the application of the Newton method for $a<F_1$ and $a>F_M$, at each step of the iteration procedure,
we can use the arrays $\xi$ and $Exp$ calculated for $N=N(x_1)$ and $N=N(x_M)$, respectively.

The quadratic approximation \eqref{QT} allows us to use much sparser grids than
the linear approximations \eqref{linintnon-log} and \eqref{linN}, for $x_n$ not small in absolute value especially.

The next trick allows us to decrease the number of points smaller still. Instead of the equation
$F(x)=A$, we solve the equation $f(x)=a$, where $f(x)=\ln F(x)$ and $a=\ln A$. Since $f$ is more regular than $F$,
the same approximations work better, and $f'_k=f'(x_k)=F'_k/F_k$, $f^{\prime\prime}_k=(F^{\prime\prime}_kF_k-(F'_k)^2)/F_k$ are easy to calculate.

\begin{example}{\rm Consider KoBoL of order $\nu=0.7$ with 				
$c_+=c_+=0.6,$				
$\lp=5$,	$\lm=-10$, $\mu=0$; $t=0.001$. The second	instantaneous moment
	$m_2=	0.093440429$  (rounded) is not small, and time step $t=0.001$ is not exceedingly small.
The order $\nu=0.7$ is not small as well; in the empirical literature, one find numerous examples of $\nu$ close to 0.
The steepness parameter $\lp$ is not too small as well (one can find examples of $\lp<1$).
Nevertheless, as several examples for quantiles demonstrate,
\begin{enumerate}[1.]
\item
	accurate Monte-Carlo simulations using FFT or HT will require grids
with the size of the mesh $10^{-5}$ or less;
\item
 if the truncation
is made at the level $F_1=10^{-8}$, then $x_1=-1.6707581397416$ (the result is found using the Newton method
with the initial approximation $-1$; three iterations were needed to satisfy the error tolerance $10^{-12}$).
 Hence, FFT or HT method would require the uniformly spaced grid of the length of the order of 160k,
 and the errors of truncation and evaluation of $F_k$ would be non-negligible;
 \item
 with the exception of a very small neighborhood of 0, the quadratic approximation
 applied to $f$ requires much sparser grid than other approximations.
 \end{enumerate}
 In Table 12 (see Appendix \ref{examplesfractiles}), we list the errors of several approximations, for several values of $A$ and several
 widths $h$ of the interval $(x_{j-1}, x_j)$ containing $f^{-1}(a)=F^{-1}(A)$.
 }
\end{example}
Labels for approximations used:
\begin{itemize}
\item
L: linear interpolation \eqref{linintnon-log};
\item
N: Newton approximation \eqref{linN};
\item
LL: linear interpolation applied to $f=\ln F$;
\item
LN: Newton approximation applied to $f=\ln F$;
\item
QT: quadratic approximation \eqref{QT} applied to $f=\ln F$.
\end{itemize}
From Table 12, it is clearly seen that QT allows one to use much sparser grid $x_1<x_2<\cdots<x_M$; the grid must be non-uniformly
spaced. At the points of the grid, $f=\ln F$ and its first and second derivatives must be precalculated, which can be done very fast
using the sinh-acceleration. For evaluation of $x=f^{-1}(a)$ for $a<f_1$ and $a>f_M$, we use the Newton method and two
precalculated arrays of small sizes, which represent functions in the dual space. No truncation is needed.

\vskip0.2cm
{\em Outline of the algorithm.}

\begin{enumerate}[I.]
\item
 In a neighborhood of  $x'=0$, e.g., in the interval $[F^{-1}(0.3), F^{-1}(0.7)]$,
 the steps $h_j=h_{j+1}-h_j$ should be of the order $10^{-5}$ if $t$ is small. For larger $t$,
 larger steps are admissible. E.g., for $t=1$, $h_j$ of the order of 0.001 can be admissible.
 \item
 As $|x_j|$ increases, $h_j$ can be made larger. As the rule of thumb, for the left tail, we
 would recommend $h_j=-0.01 f_{j+1}/f'_{j+1}$ for points below $x_{low}:=F^{-1}(0.3)$; the points $x_j$ below $x_{low}$
 are calculated in the same cycle as the values  $f_j, f'_j, f^{\prime\prime}_j$. For the right tail, the recommendations are by symmetry.
 \item
 The grid is truncated at $F^{-1}(0.001)$ or $F^{-1}(0.0001)$, and at $F^{-1}(0.999)$ or $F^{-1}(0.9999)$.
 For all points of the grid, the values  $f_j, f'_j, f^{\prime\prime}_j$ should be calculated and stored.  \item
  The parameters of the sinh-acceleration
 should be calculated for $x_0=F^{-1}(0.001)$ (or $F^{-1}(0.0001)$), and arrays $\xi=\xi^-$ and $Exp^-$ calculated and stored.
 (The sign minus indicates that the arrays will be used for calculations in the left tail). 
 \item
  The parameters of the sinh-acceleration
 should be calculated for $x_M=F^{-1}(0.999)$ (or $F^{-1}(0.9999)$), and arrays $\xi=\xi^+$ and $Exp^+$ calculated and stored.
 (The sign plus indicates that the arrays will be used for calculations in the right tail).
 \item
 If a realization $A\sim U[0,1]$ belongs to $(0,F^{-1}(0.001)]$ (resp., to $[0,F^{-1}(0.999)]$,
 then an interval $[x_{n-1}, x_n]$ s.t. $f_{n-1}<\ln A\le f_n$ (resp., an interval $[x_n, x_{n+1}]$ s.t. $f_n\le A<f_{n+1}$)
 should be found, and the quadratic approximation \eqref{QT} applied.
 \item
 If $A<0.001$, the Newton method is applied with $x_0$ as the initial approximation;
 the stored values are used to calculate $f(x_n)/f'(x_n)$ at each step of the Newton method.
 \item
 If $A>0.999$, the Newton method is applied with $x_0$ as the initial approximation;
 the stored values are used to calculate $f(x_n)/f'(x_n)$ at each step of the Newton method.

 \end{enumerate}
 We call the
 arrays $\xi^\pm$ and $Exp^\pm$ {\em the conformal principal components}. The conformal principal components
 (evaluated at points of a grid different from the grids used for calculations in the tails) can be used to
 calculate quantiles $F^{-1}(A)$ for $A\in [0.001, 0.999]$. In this case, the bisection method instead of the Newton method
 should be used.

 \subsection{Monte-Carlo simulations in  regime-switching L\'evy models}\label{ss:MCRSwLevy}
 For each state of the modulating Markov chain, precalculate the pdf and cpf of the corresponding L\'evy process in an appropriate neighborhood of 0,
 and the conformal principal components for calculations in the tails. Simulate the Markov chain, and, at each time step and the
 current realization of the state of the chain, simulate the corresponding L\'evy process, and add the simulated increment.
 The most time consuming part (simulation of increments of the L\'evy processes) can be easily parallelized after
 a sample path of the modulating chain is simulated. In the end, the simulated increments are added one by one,
 and we obtain a sample path of the pair (the modulating Markov chain, the Markov-modulated L\'evy process). 
 
\subsection{Monte-Carlo simulations in  the Heston model}\label{ss:MCHeston}
 The most straightforward way is to approximate the volatility process (the square root process) with a Markov chain, and
 apply the scheme outlined in Section \ref{ss:MCRSwLevy}. If this approximation is avoided, the accuracy of the simulations
 improves.

Consider simulations with the time step $\tau$. Apparently, it suffices to simulate the sequence 
$(v_{j\tau}, \De X^{v_{j\tau}}_{\tau})_{j=0,1,\ldots}$, where $v_0>0$ is given, and $\De X^{v_{j\tau}}_{\tau}$ is the increment of the L\'evy process
$X^{v_{j\tau}}$ with the characteristic exponent $\psi(v_{j\tau},\xi)$, where $\psi(v,\xi)=-(vB_0(\tau,\xi)+C_0(\tau,\xi))/\sg_0^{2}$ and
functions $B_0$ and $C_0$ are given by
\eqref{defB0}-\eqref{defD01}.

For the evaluation of the cpdf, pdf and the derivative of the pdf of $X^v$, the domain of the analyticity of $\psi(v,\xi)$, the sinh-acceleration parameters and the mesh size $\ze$ can be chosen the same for any $v\ge 0$.
Since $\Re B_0(\tau,\xi)\to +\infty $ as $\xi\to \infty$ in the conus used in the recommendations for the choice
of the parameters in the sinh-acceleration procedure, the number of terms $N=N(v)$ decreases as $v$ increases.
Hence, the same $N$ can be used for all $v$, hence, the same arrays $\vec{\xi}$, $B_0(\tau,\vec{\xi})$ and $C_0(\tau,\vec{\xi})$
(conformal principal componentss) can be used for all $v\ge 0$. To decrease the number of terms, it is advisable to use different $N$ for $v$ in several selected
intervals, e.g., for $v\in [0,0.01]$, $v\in [0.01, 0.05]$, $v\in [0.05, 0.15]$ and $v\in [0.15, +\infty)$.

For each $v$, we suggest to use different sets of principal components for the evaluation of pdf and cpdf
in a small neighborhood $[-a,a]$ of 0, where, e.g., $a=0.03-0.05$, in the left tail $(-\infty, -a)$ of the distribution, and the right tail 
$(a,+\infty)$. The parameters of the sinh-acceleration are defined by $z_t=0, z_t=-a$ and $z_t=a$, as in Section \ref{EuroHeston}.
Note that, for typical parameters of the Heston model, the arrays of principal components needed to calculate quantiles with the accuracy of the order of E-08
or even E-09 are fairly short:
10-40 in length, and the total number of arrays is $4\times 3\times 3=36$ is moderate as well. Hence, all necessary arrays of the 
conformal principal 
components can be easily calculated and stored at the preliminary step of the simulation procedure. 
Thus, we suggest the following procedure. 
\begin{enumerate}[I.]
\item
(Preliminary step). For the given set of parameters of the Heston model
and the chosen error tolerance for calculation of pdf and cpdf
 \begin{enumerate}[a.]
 \item
Calculate and store arrays of the conformal principal components.
\item
Design a function which, for any $v\ge 0$, approximately calculates $x_-(v)$ and $x_+(v)$ such that
\begin{itemize}
\item
the cpdf $F(v,\cdot)$ is convex on $(-\infty,x_-(v))$ and on $(x_+(v),+\infty)$ or
\item
if, in the numerical procedure for the quantile evaluation, the Newton method is applied to the equation $\ln F(v,x)-\ln A=0$ instead
of $F(v,x)=A$, then $\ln F(v,\cdot)$ must be convex on $(-\infty,x_-(v))$ and on $(x_+(v),+\infty)$.
\end{itemize}
Note that the efficiency of the algorithm decreases only insignificantly if the chosen $x_\pm (v)$ is inside the neighborhood of $\pm\infty$
where $F(v,\cdot)$ (or $\ln F(v,\cdot)$) is convex. Moderately accurate approximations are sufficient.
\end{enumerate}

\item
Simulate a sample path $(v_j)_{j=0,1,\ldots, N}$ of the square root process, with the time increment $\tau$, where $N=T/\tau$ is the number of time steps;
the simulation procedure must produce non-negative numbers.
\item
For each $j=0,1,\ldots, N-1$ do the following (this step can be easily parallelized):
\begin{enumerate}[a.]
\item
take a random sample $a_j$ from the uniform distribution;
\item
using an appropriate set of the conformal principal components (the choice
is determined by the pair $(v_j, a_j)$), find the quantile $Z_j=Z(v_j, a_j)$ of $X^{v_j}_\tau$.
If $a_j\le F(v_j, x_-(v_j))$ or $a_j\ge F(v_j,x_+(v_j))$, use the Newton method with the initial approximation $x_\pm(v_j)$.
Otherwise, apply the bisection method on the interval $[F(v_j, x_-(v_j)), F(v_j, x_+(v_j))]$.
\item
find $Y_j$ from the equation $-Z_j=-Y_j-(\rho/\sg_0)v_j+\mu_0\tau$.
\end{enumerate} 
\item
Calculate the sample path $X_n=X_0+\sum_{j=0}^{n-1}Y_j, n=1,2,\ldots, N,$ of $X=\ln S$.
\item
Repeat Steps II-IV, and use the simulated paths $(v_n,X_n)_{n=0,1,\ldots, N}$ for pricing contingent claims.
\end{enumerate} 
\begin{rem}{\rm If the parallelization at  Step 3b is efficient, the total CPU time for the simulation of one path is, essentially, the sum of the CPU time
needed to simulate a path of the square root process, and of the CPU time needed to calculate the quantile for a given pair $(v_j,a_j)$. Typically, the
latter time is less than 0.1 msec., in  the MATLAB realization.}
\end{rem}

\section{Conclusion}\label{concl}
In the paper, we developed a general methodology for fast and accurate evaluation of integrals of the form
\[
I=\int_{\Im\xi=\om_0}g(\xi)d\xi,
\]
that
appear in many problems in probability, mathematical finance, and other areas of applied mathematics,
and formalized the properties of the integrands that can be calculated using this scheme.
The methodology is applicable if an  integrand $g(\xi)$ admits  analytic continuation to a union of a strip around the line of integration
and a conus that contains the strip, and decays sufficiently fast as $\xi\to \infty$ remaining in the union. The analyticity of the integrand
in the strip and sufficiently fast decay at infinity  allows one to  exploit an important property of the infinite
trapezoid rule, namely, exponential decay of the discretization error as function of $1/\ze$, where $\ze>0$ is the mesh size.
This property is well-known and widely used in the literature. In probability, the characteristic functions of various probability
distributions related to diffusion processes and jump-diffusion processes with exponentially decaying densities of jumps are analytic
in a strip around the real axis. Unfortunately, in many cases of interest such as the CIR model, VG model and KoBoL, the
characteristic function decays slowly as $\xi\to\infty$, and millions of terms in the simplified trapezoid rule may be needed
to satisfy even a moderate error tolerance.

However, if $g(\xi)$ admits analytic continuation to a conus and decays polynomially or exponentially as $\xi\to\infty$ remaining in the conus,
then a change of the variable of the form $\xi=i\om_1+b\sinh (i\om+y)$ in the integral is justified. After the change of variables, the new integrand is analytic
in a strip around the real axis and decays exponentially if the initial integrand decayed polynomially and
as $\exp[-c\exp |y|]$, where $c>0$, if the initial integral decayed exponentially. In the result, in many cases, $N<10$ suffice to satisfy the error tolerance $\eps=10^{-7}$; typically, less than 50 terms suffice,
and in essentially all cases of interest, $N$ of the order of 100-150
suffices to satisfy the error tolerance $10^{-12}$.


We formalized the properties of the characteristic functions of processes and distributions that allow one to apply
the sinh-acceleration, and illustrated the general scheme of the sinh-acceleration with several typical examples: pdf of L\'evy processes;
pricing of European options in L\'evy models, Heston model, CIR model, and a subordinated L\'evy model.
The scheme admits straightforward modification to affine stochastic volatility models and interest rate models (it suffices
to replace in \cite{pitfalls} the fractional-parabolic change of variables $\xi=i\om_1\pm i\sg(1\mp i\eta)^\al$ with the sinh-acceleration,
and take into account that the maximal conus of analyticity is, in the general case, narrower than in the case of the Heston model
and CIR model); jumps can be included as in \cite{pitfalls}. Note that if the fractional-parabolic change of variables is used,
then the rate of decay of the integrand increases but the resulting number of terms remains too large in a number of important cases such as
the evaluation of the probability distribution function at the peak (see \cite{iFT,pitfalls}), and pricing options in the interest models
of the CIR-type.

We also outlined applications of the sinh-acceleration to the calculation of quantiles and Monte-Carlo simulations in L\'evy models,
regime-switching  L\'evy models, and the Heston model. We note that for the evaluation of pdf and cpdf over a long interval $(a,b)$ (even semi-infinite one),
it suffices to evaluate several functions at points of a grid of a small and moderate length,
and use these arrays (we suggest the name {\em the conformal principal components}) for any $x\in (a,b)$. Assuming that the 
conformal principal components are precalculated, the last step requires several microseconds (in MATLAB realization),
hence, quantiles can be calculated in 2-5 dozen of microseconds. It is important that, in the process of simulations,
the truncation of the state space becomes unnecessary, and the truncation errors are avoided.

An additional advantage of the sinh-acceleration as compared to the fractional-parabolic change of variables is that
the width of the initial strip of analyticity is almost irrelevant in the former case as explained in Remark \ref{b-rem} whereas
in the latter case, a narrow strip implies a very large number of terms, and makes it necessary to move the line of integration
to a wider strip \cite{iFT,paraHeston,pitfalls}. However, the angle between the rays that define the conus of analyticity is important.

\sbr
The general scheme of the sinh-acceleration consists of the following steps
\begin{enumerate}[I.]
\item
Find $\gam\le 0 <\gap$ or $\gam<0\le \gap$ such that the integrand $g(\xi)$ is analytic in the cone $\cC_{\gam,\gap}$, and
decays as $\xi\to\infty$ remaining in the cone.
\item
 Set $\om=(\gap+\gam)/2$, $d_0=(\gap-\gam)/2$.
\item
Find a strip $S_{(-\mum,\mup)}$ of analyticity of the integrand around the initial line $\Im\xi=\om_0$ of integration.
\item
Set $a_-= \sin (\min\{\pi/2,-\gam\}), a_+=\sin (\min\{\pi/2,\gap\})$, and
\[
\om_1=\frac{\mup a_-+\mum a_+}{a_++a_-},\
b_0=\frac{\mup-\mum}{a_++a_-}.
\]
\item
Choose $k_b=0.8-0.95, k_d=0.8-0.95$ and set $b=k_bb_0$, $d=k_dd_0$.
\item
Derive an upper bound for the Hardy norm $H$ of $f(y)=g(i\om_1+b\sinh (i\om+y))b\cos(i\om+y)$ as an analytic function
on $S_{(-d,d)}$. Typically, a simple approximation $H=10(|f(id)|+|f(-id)|)$ works well.
\item
Given the error tolerance $\eps$, choose the mesh size as $\ze=2\pi d/\ln(H/\eps)$.
\item
Derive an approximate bound for $g(e^{i\om}\rho)$ and $g(e^{i(\pi-\om)}\rho)$ for $\rho$ in a neighborhood of $+\infty$.
\item
Given the error tolerance $\eps$, use the bound to find $\La_1$ such that
\[
\int_{\La_1}^{+\infty} |g(e^{i\om}\rho)|d\rho+\int_{\La_1}^{+\infty} |g(e^{i(\pi-\om)}\rho)|d\rho<\eps.
\]
\item
Set $\La=\ln(2\La_1/b)$, $N=\mathrm{ceil} (\La/\ze)$.
\item
Apply the simplified trapezoid rule
\begin{equation}\label{simpltrapgen}
\int_{\Im\xi=\om_0}g(\xi)d\xi\approx b\ze\sum_{|j|\le N}g(i\om_1+b\sinh (i\om+j\ze))\cos(i\om+j\ze).
\end{equation}
\item
If $\overline {g(\xi)}=g(-\xi),\ \forall\ \xi$, use the following faster version of \eqref{simpltrapgen}
\begin{equation}\label{simpltrapgen2}
\int_{\Im\xi=\om_0}g(\xi)d\xi\approx 2b\ze\sum_{0\le j\le N}(1-\de_{j0}/2)\Re \left(g(i\om_1+b\sinh (i\om+j\ze))\cos(i\om+j\ze)\right),
\end{equation}
where $\de_{jk}$ is Kronecker's delta.
\end{enumerate}
Note that, in addition to the theoretical bounds for the error of the sinh-acceleration, one can easily check the accuracy
of the result choosing a different pair $(\gam,\gap)$ so that the new $\om$ is different from the old one, and 
a longer and finer grid than recommended, and
recalculate the integral. The probability of a random agreement between
the two results is negligible, hence, the absolute value of the difference is a good proxy for the error.

The sinh-acceleration change of variables can be applied to pricing basket options, European options in quadratic term structure models,
models with Wishart dynamics (in both cases, the integrands decay very slowly, as in the CIR model, hence, accurate calculations
using the popular FFT techniques are essentially impossible as demonstrated in \cite{pitfalls} for affine models of $A_n(n)$ class),
3/2 models and, essentially, any model where the (conditional) characteristic function can be calculated, e.g., Barndorff-Nielsen and Shephard model, and subordinated
models more general than the model considered in the paper.
The methodology can be also applied
to evaluation of special functions \cite{Sinh}, the Wiener-Hopf factors, 
calculation of distributions of the infimum and supremum of L\'evy processes, pricing of path-dependent options,
Monte-Carlo simulations of options with barrier features, pricing in models 
 of Ornstein-Uhlenbeck type, and in many other cases.
The efficiency of the calibration procedure of the Heston model
in \cite{HestonCalibMarcoMe,HestonCalibMarcoMeRisk} can also be improved. To apply the sinh-acceleration to pricing in
regime-switching models,
it suffices to use matrix operations instead of the scalar ones (and, naturally, study the region where the matrix functions
and their reciprocals are analytic; formally, the scheme remains the same). 
Applications to stochastic covariance models are similar to
applications to stochastic volatility models.

\appendix

\section{Numerical examples}\label{numer}
The calculations in the paper
were performed in MATLAB 2017b-academic use, on a MacPro with a
2.8 GHz Intel Core i7 and 16 GB 2133 MHz LPDDR3 RAM.

\subsection{Tables I. Pdf of NTS}\label{tablespdfNTS}
The parameters of the process are  $\mu=0$, $\al=10, \be=0$
for $t=0.004$; $\de=m_2\la^{\nu-2}$, where $m_2=\psi^{\prime\prime}(0)=0.1$ is the second instantaneous moment.
In Table 1, $\nu$ varies, and the pdf is calculated at the peak. In Table 2, $\nu=0.3$ is fixed,
and $x$ varies.

The benchmark prices are obtained using the sinh-acceleration with different $\gap, \gam$, $\ze$ and $N$;
the results differ by less than E-15.
For each $\nu$, $x'=x-\mu t$ and the method
of integration, the mesh size $\ze$ and $\La$ are chosen using the universal prescriptions for the error tolerance
$\eps$.
In some cases, these prescriptions are either inaccurate or lead to the overkill; then
we show the results obtained with $\ze/k_\ze$ and $k_\La\La$ instead of the prescribed $\ze$ and $\La$.
Typically, approximate bounds for the Hardy norm are inaccurate for $\nu<1$ ($\ze$ must be about 30\% smaller)
and lead to an overkill for $\nu>1$ ($\ze$ can be about 5-10\% larger). In some cases, $\La$ can be 5-10\% smaller as well.
The CPU time is in microseconds, the average over 1 mln runs.
\begin{table}

\caption{\small Pdf of $X_t$ at the peak at 0, rounded, and truncation errors of the calculation
using sinh-acceleration and flat inverse Fourier transform. Dependence on the order $\nu$.}
{\tiny
\begin{tabular}{c|ccccccc}
\hline\hline
$\nu$  & 0.1 &	0.3	& 0.5 &	0.9 &	1.1	& 1.5	& 1.9\\
$p_t(0)$ & 1.64335E+11 &	27813.7583 &	1077.36380	& 111.103247 &	64.5381220	& 32.7368302 &	21.6193636
\\\hline
SINH  & & & & & & & \\
\hline
& $\eps=10^{-15}$ & $k_\ze=1$ & $k_\La=1$ &  & & &  \\
$N$ & 30 &	30	& 33 &	32 &	33	& 34 &	35
\\
Error 	& 0	& 0	 &	0 &	0	& 0	& 0 &
0\\
Time & 6.8&	6.7 & 7.0 &	7.0	& 7.0 &	7.1 &	 7.4\\
\hline
& $\eps=10^{-15}$ & $k_\ze=1$ & $k_\La=0.95$ &  & & &  \\
$N$ & 29	& 29 &	31	& 31	& 31 &	32 &	33
\\
Error 	& 0	& 0	 &	0 &	0	& 0	& -1.2E-12 &	-9.9E-14
\\
Time & 6.6& 	6.6 &	6.7 &	6.8 &	6.8 &	6.8 &	6.9
\\
\hline
& $\eps=10^{-7}$ & $k_\ze=1.1$ & $k_\La=1$ &  & & &  \\
$N$ & 19 &	17 &	17 &	17 &	17 &	18 &	18
\\
Error 	& 1.743E+03 &	2.7E-06 &	5.9E-07	& 2.3E-07 &	-9.6E-08 &	5.3E-09 &	-6.6E-09
\\
Time & 5.3 &	4.9 &	5.3 &	5.3&5.3 &	5.2 & 5.5
\\
\hline
& $\eps=10^{-4}$ & $k_\ze=1$ & $k_\La=1$ &  & & &  \\
$N$ & 13 &	10 &	10 &	9 &	9 &	10	& 10
\\
Error 	& -5.1E+06 &	1.48 &	0.020 &	0.0013 &	0.0013 &	-0.00032 &	-8.3549E-05
\\
Re.Err. & -3.1E-05 &	5.3E-05 &	1.9E-05 &	1.2E-05 &	2.0E-05	& -9.8E-06 &	-3.9E-06\\
Time & 4.6 &	4.2 &	4.1 &	4.1 &	4.0 &	4.2 &	4.1
\\\hline
Fract.  & Parabolic  & & & & & & \\
\hline
& $\eps=10^{-15}$ & $k_\ze=1$ & $k_\La=1$ &  & & &  \\
$N$ & 17851 &	10866 &	10244 &	1250 &	729	& 345 &	201
\\
Error 	& -146.4 &	4.3E-09 &	-1.0E-11 &	0 &	0 &	0 &	0
\\
Time & 2161 &	1322 &	1129 &	167 &	112 &	68 &	80
\\
\hline
& $\eps=10^{-7}$ & $k_\ze=1$ & $k_\La=1$ &  & & &  \\
$N$ & 6512 &	3361 &	2921 &	460	& 268 &	130 &	78
\\
Error 	& -113 &	8.0E-08	& 5.7E-10 &	-9.0E-11 &	-2.3E-10 &	-3.5E-10 &	-9.6E-11
\\
Time & 754 &	410 &	350 &	82 &	57 &	55	& 35
\\\hline
& $\eps=10^{-4}$ & $k_\ze=1$ & $k_\La=1$ &  & & &  \\
$N$ & 3334 &	1558 &	1279 &	238	& 138	& 68 &	41
\\
Error 	& -245 &	-0.0003 &	2.9E-05 &	1.3E-07	& 1.4E-06 &	1.7E-07	& -6.0E-07
\\
Time & 390 &	203 &	167 &	92 &	55 &	29 &	19
\\
\hline
Errors & of  flat IFT;& $\ze$  is& fixed & & & &\\
$N=10^5$ &	-1.64E+11 &	-10795 &	-1.5E-04
&	1.1E-07	& 7.7E-08 & 3.1E-08 &	4.1E-09\\
$N=10^6$ &	-1.64E+11 &	-1175 &	2.1E-07
&	1.1E-07	& 7.7E-08 & 3.1E-08 &	4.1E-09\\
$N=10^7$ &	-1.64E+11 &	4.6 &	2.1E-07
&	1.1E-07	& 7.7E-08 & 3.1E-08 &	4.1E-09\\
$N=2\cdot 10^7$ &	-1.64E+11 &	0.28 &	2.1E-07
&	1.1E-07	& 7.7E-08 & 3.1E-08 &	4.1E-09\\\hline
\end{tabular}
}
\begin{flushleft}
{\tiny
$X$: completely symmetric NTS L\'evy process with
$\la=10$, $m_2=\psi^{\prime\prime}(0)=0.1$, $\de=m_2\la^{\nu-2}$, $t=0.004$, $\nu$ varies.
Study of the efficiency of the universal recommendations for the parameter choice
for the sinh-acceleration and fractional-parabolic transformation.
For flat iFT, study of  the dependence of the truncation error on $\nu$ and the number of terms $N$.
Time: CPU time in microseconds, the average over 1 mln runs.
}
\end{flushleft}
 \end{table}

\begin{table}
 \caption{\small Left tail of pdf of $X_t$, rounded, and truncation errors of the calculation
using the sinh-acceleration and flat inverse Fourier transform. Dependence on the distance from the peak.}
{\tiny
\begin{tabular}{ccccccccc}
\hline\hline
$x$ & -0.3 &	-0.25	& -0.2	& -0.15	& -0.1 &	-0.05 &	-0.02 &	-0.01 \\
$p_t(x)$ & 0.0029428&0.0059872	& 0.01277601 &	0.0294055 &	0.0777612 &	0.2894651 &	1.160531 &	2.93835839	\\
SINH  & & & & & & & \\
\hline
& $\eps=10^{-15}$ & $k_\ze=1$ & $k_\La=1$ &  & & &  \\
$N$ & 19 &	20&22 &	24 &	26	& 31 &	37 &	42

\\
Error &	7.0E-17 &	6.9E-17	& -3.0E-16	&-1.0E-16 &	-4.0E-16 &	-1.0E-15&	0	&0
\\
Time & 10.9 &	11.1 &	11.7 &	12.2 &	12.5&	15.6 &	17.6 &	18.1
\\
\hline
& $\eps=10^{-7}$ & $k_\ze=1$ & $k_\La=1$ &  & & &  \\
$N$ & 8	& 8	& 9	& 10 &	11	& 14	& 17 &	19
\\
Error &	1.5E-08 &	-1.2E-08 &	9.1E-09	& 2.1E-08 &	-6.1E-09 &	-4.2E-08 &	-4.6E-08 &	1.1E-07

\\
Time & 7.8 &	7.9 &	8.1 &	8.0 &	8.1 &	8.9 &	9.7 &	10.1
\\
\hline
& $\eps=10^{-4}$ & $k_\ze=1$ & $k_\La=1$ &  & & &  \\
$N$ & 4 &	4 &	5 &	5	& 6 &	8 &	10 &	12
\\
Error &	-1.4E-05 &	2.9E-05 &	-2.6E-05 &	3.1E-05 &	-1.9E-05 &	5.8E-05 &	1.1E-04 &	-1.6E-04
\\
Time & 6.7 &	6.3 &	6.7& 6.7 &	10.2 &	7.1 &	8.2 &	8.9
\\\hline
Fract.  & Parabolic  & & & & & & \\
\hline
& $\eps=10^{-15}$ & $k_\ze=1$ & $k_\La=0.8$ &  & & &  \\
$N$ & 38 &	41 &	44	& 48 &	55	& 70	& 96 &	122
\\
Error 	& 5.0E-17 &	4.0E-17 &	-1.0E-16 &	0 &	1.9E-16	& 1.0E-15 &	0 &	0
\\
Time & 18.0 &	19.3& 22.5 &	23.8 &	26.5 &	33.3 &	44.9 &	56.8
\\\hline
& $\eps=10^{-7}$ & $k_\ze=1$ & $k_\La=0.8$ &  & & &  \\
$N$ & 14 &	15 &	16 &	17 &	20 &	25	& 35 &	44

\\
Error 	& 1.0E-10 &	5.7E-11 &	5.8E-11 &	-1.4E-11 &	8.9E-12 &	-4.6E-11 &	9.0E-09 &	3.2E-07

\\
Time & 9.8 &	10.7 &	11.1 &	11.7 &	12.5&	15.6 &	20.3 &	24.5
\\\hline

& $\eps=10^{-4}$ & $k_\ze=1$ & $k_\La=0.8$ &  & & &  \\
$N$ & 9	& 10 &	10 &	11 &	13	& 16 &	22 &	28
\\
Error 	& 5.5E-06 &	1.9E-06 &	9.0E-07 &	-5.0E-07 &	-9.4E-07 &	3.6E-07 &	-1.3E-07 &	6.5E-07
\\
Time & 6.8 &	7.1 &	6.8 &	7.3&	7.9 &	9.4 &	11.5 &	15.0

\\\hline
& $\eps=10^{-4}$ & $k_\ze=0.95$ & $k_\La=0.8$ &  & & &  \\
$N$ & 9 &	9&10	& 11 &	12 &	15	& 21	& 27
\\
Error 	& 1.7E-05 &	6.5E-06 &	1.2E-06	& 1.6E-06 &	2.4E-06	& -3.5E-06	&-8.5E-07	&-3.5E-07

\\
Time & 6.6 &	6.2&	6.9&	6.9 &	7.3 &	8.5 &11.3 &	13.7
\\\hline

Errors & of  flat IFT;& $\ze$  is& fixed & & & &\\
$N=10^5$ & 0.0057 &	-0.0056 &	0.0056 &	-0.0055 &	0.0054 &	-0.0054 &	-0.48 &	1.53
\\
$N=10^6$ & 0.0018 &	0.00040 &	-0.0018 &	-0.0045 &	-0.0070 &	-0.0088 &	-0.0094 &	-0.0094 \\

$N=10^7$ & -1.3E-06 &	4.6E-06 &	7.1E-06	& 1.2E-06 &	-1.2E-05 &	-2.6E-05 &	-1.4E-05 &	-0.00015 \\


\hline
\end{tabular}
}
\begin{flushleft}
{\tiny
$X$: completely symmetric NTS L\'evy process
of finite variation, with
$\la=10$, $m_2=\psi^{\prime\prime}(0)=0.1$, $\nu=0.3$, $\de=m_2\la^{\nu-2}$, $t=0.004$, $x$ varies.
Study of the efficiency of the universal recommendations for the parameter choice
for the sinh-acceleration and fractional-parabolic transformation.
For flat iFT, study of  the dependence of the truncation error on $\nu$ and the number of terms $N$.
Time: CPU time in microseconds, the average over 1 mln runs.
}

\end{flushleft}
 \end{table}

\subsection{Tables II. The Heston model}\label{tableHeston}
 Table 3: $T=0.004$, comparison of the sinh-acceleration with the fractional-parabolic method, for one strike.

 Table 4:  $T=0.004$,  calculation using the same set of parameters of SINH for all strikes.
 The errors and times for calculation of prices for different numbers of strikes.

 Tables 5-8: the same as Table 4, for $T=0.1, 1.0, 5, 15$.

 Table 9: the comparison of the performance of the sinh-acceleration method with the Lewis-Lipton and Carr-Madan realizations of the flat iFT method.
 In all cases, the standard prescriptions ($\ze=0.125$, $N=4096$) imply negligible  truncation errors, hence,
the errors shown are, essentially, the
discretization errors.
 \begin{table}
\caption{Put in the Heston model. SINH acceleration vs Fractional Parabolic. }
{\tiny
\begin{tabular}{cccccccc}
\hline\hline
$K$ & 85 &	90	& 95 &	100 &	105	& 110 &	115\\
$x'$ & 0.205437159 &	0.1482787452 &	0.0942115239 &	0.0429182295 &	-0.0058719347 &	-0.0523919503 &	-0.0968437129\\
$V_{\mathrm{put}}$ & 8.75606E-07 & 0.0004112657 &	0.046751956	& 1.0603962422	& 5.0125262734	& 9.991210204 &	 14.9908003682\\
\hline\hline
SINH \\\hline
& $\eps=10^{-12}$ & $k_\ze=1.8$ & $k_\La=1.35$ \\
$\ze$ & 0.135219069 &	0.141969971 &	0.149051905 &	0.156224513 &	0.161505024	& 0.154669039	& 0.148348105
\\
$N$ & 58	& 56 &	54 &	52 &	50 &	53 &	55\\
Error &  -2.98E-12 &	3.95E-12 &	0	& -4.00E-14 &	3.91E-14 &	0 &	9.95E-14\\
Time & 48.7 &	48.1 &	47.1 &	45.6 &	45.2 &	46.8 &	48.2\\\hline
& $\eps=10^{-6}$ & $k_\ze=1.8$ & $k_\La=1.35$ \\
$\ze$ & 0.239504852 &	0.251634664 &	0.264309759 &	0.277086239 &	0.286405666	& 0.274054828 &	0.262614271\\
$N$ & 31 &	30	& 29 &	28 &	27 &	28 &	30\\
Error & -1.41E-07 &	-1.35E-07 &	-1.01E-07 &	-6.55E-10 &	5.52E-09 &	-6.43E-08 &	1.45E-07\\
Time & 36.3& 36.0 &	35.6 &	34.9 &	34.8 &	35.6 &	36.0\\\hline
& $\eps=10^{-2}$ & $k_\ze=1.8$ & $k_\La=1.35$ \\
$\ze$ & 0.450079392 &	0.473594213 &	0.497964069 &	0.522276556 &	0.539648211 &	0.515421352 &
	0.492902392\\
$N$ & 14 &	14	& 13 &	13 &	13 &	13 &	14\\
Error & -2.16E-04 &	-6.09E-04 &	-1.63E-04 &	9.62E-06 &	-1.30E-04 &	-8.23E-04 &	-2.49E-03
\\
Time & 28.9 &	30.4 &	27.6 &	26.6 &	27.0 &	27.4 &	27.0
\\\hline\hline
Fract. Para \\\hline
& $\eps=10^{-12}$ & $k_\ze=1$ & $k_\La=1$ \\
$\ze$ & 0.1501040751 &	0.1498588017 &	0.149627309 &	0.1494081636 &	0.158551587 &	0.1583831405 &	0.1582217381\\
$N$ & 290 &	341 &	420	& 563 &	759 &	498	& 393\\
Error  & -1.28E-13 &	2.80E-14 &	-2.80E-14 &	-1.01E-14 &	-7.02E-14 &	-2.90E-13 &	-2.01E-13\\
Time & 115.0 &	109.2 &	120.3 &	148.5 &	198.5 &	135.3 &	115.0\\\hline
& $\eps=10^{-6}$ & $k_\ze=0.85$ & $k_\La=0.85$ & \\
$\ze$ & 0.3006157713 &	0.299780534 &	0.2989939791 &	0.2982509424 &	0.3306004853 &	0.3299784849 &	0.3293834519\\
$N$ & 92 &	109 &	135 &	182 &	237	& 154 &	121\\
Error  & -4.70E-07 &	-1.07E-06 &	3.05E-07 &	-1.91E-07 &	-1.53E-06 &	-1.55E-06 &	-1.5E-06\\
Time & 87.3 &	55.7 &	60.4 &	70.1 &	81.3 &	65.2 &	60.0\\\hline
& $\eps=10^{-2}$ & $k_\ze=0.85$ & $k_\La=0.85$ & \\
$\ze$ & 0.565 &	0.562 &	0.560 &	0.557 &	0.682 &	0.679 &	0.676\\
$N$ & 35 &	42 &	53 &	72	& 86 &	55 &	43\\
Error & 1.2E-04 &	-4.3E-05 &	1.2E-04 &	-0.0057 &	2.0E-04 &	2.8E-05 &	-1.3E-04\\
Time & 60.2 &	48.5 &	51.5 &	57.8 &	67.1 &	55.5 &	43.8\\\hline\hline

\end{tabular}
}
\begin{flushleft}
{\tiny
{\em Put option parameters:} $r=0.02$, $\de=0$, $T=0.004$, $S=100$.
\sbr
{\em Parameters of the Heston model: $v_0=0.18; \rho=-0.58, \sg_0=2.44, \ka=0.30, m=0.18$.}
\sbr
{\em Time: CPU time in microseconds, the average over 1 mln runs.}
\sbr
{\em Given the error tolerance $\eps$, the parameters
of the schemes are chosen for each point,  using the
universal prescriptions with the corrections
factors $k_b=0.8, k_d=0.8$, $k_\ze, k_\La$.}
}
\end{flushleft}

 \end{table}

 \begin{table}
\caption{\small Put in the Heston model, $T=0.004$. Prices and errors of the SINH-acceleration.
}
{\tiny
\begin{tabular}{cccccccc}
\hline\hline
$K$ & 85 &	90	& 95 &	100 &	105	& 110 &	115\\
$x'$ & 0.205437159 &	0.1482787452 &	0.0942115239 &	0.0429182295 &	-0.0058719347 &	-0.0523919503 &	-0.0968437129\\
$V_{\mathrm{put}}$ & 8.75606E-07 & 0.0004112657 &	0.046751956	& 1.0603962422	& 5.0125262734	& 9.991210204 &	 14.9908003682\\
\hline\hline
 $\eps=10^{-12}$ & 4.26E-14 &	5.68E-14 &	-1.42E-14 &	-1.74E-12	& 6.01E-12 &	2.08E-11 &	2.26E-10\\
 $\eps=10^{-6}$ & 8.65E-10	& 7.82E-09 &	1.91E-08 &	-1.68E-07 &	4.23E-07 &	-6.85E-07 &	3.52E-06\\
$\eps=10^{-2}$ & -4.75E-03 &	1.06E-02 &	-2.25E-02 &	4.14E-02 &	-5.33E-02 &	6.09E-02 &	-6.89E-02
\\\hline
\end{tabular}
}
\begin{flushleft}
{\tiny
{\em Put option parameters:} $r=0.02$, $\de=0$, $T=0.004$, $S=100$.
\sbr
{\em Parameters of the Heston model: $v_0=0.18; \rho=-0.58, \sg_0=2.44, \ka=0.30, m=0.18$.}
\sbr

{\em Time: CPU time in microseconds, the average over 1 mln runs.}
\sbr
Given the error tolerance,
the parameters are chosen the same for strikes in the range $[85, 115]$, using the
universal prescriptions with the corrections
factors $k_b=0.8, k_d=0.8$, $k_\ze=1.85, k_\La=1.3$.
\sbr For $\eps=10^{-12}$: $N=89$,	$\ze=0.081939329$, CPU time for 7 and 120 strikes: 69.8 and 586.8 microsec.,
respectively.
\sbr For $\eps=10^{-6}$: $N=57$,	$\ze=0.118334081$, CPU time for 7 and 120 strikes: 45.4 and
 377.6 microsec., respectively.
\sbr For $\eps=10^{-2}$: $N=30$,	$\ze=0.181940202$, CPU time for 7 and 120 strikes: 30.3 and
240.7 microsec., respectively.
}
\end{flushleft}

 \end{table}

 \begin{table}
\caption{\small Put in the Heston model, $T=0.1$. Prices and errors of the SINH-acceleration
with the universal choice of the parameters.
}
{\tiny
\begin{tabular}{cccccccc}
\hline\hline
$K$ & 85 &	90	& 95 &	100 &	105	& 110 &	115\\
$x'$ & 0.2085894213	& 0.1514310075 &	0.0973637862 &	0.0460704918 &	-0.0027196724 &	-0.049239688 &	-0.0936914506\\
$V_{\mathrm{put}}$ & 1.1764633175 &	1.8719759966 &	2.9150895284 &	4.5125209091 &	7.067104472 &	10.7962013124 &	 15.2373482324
\\
\hline\hline
 $\eps=10^{-12}$ & 0 &	4.00E-14	& -3.02E-14	& -7.02E-14	& 0 &	9.95E-14 &	-3.00E-13
\\
 $\eps=10^{-6}$ & -1.44E-10 &	-3.26E-10 &	2.84E-09 &	1.31E-09 &	-5.60E-08 &	2.08E-07 &	-5.02E-07
\\
$\eps=10^{-2}$ & -1.17E-08&	-8.05E-08 &	2.96E-07 &	1.06E-06 &	-1.18E-06 &	-7.91E-06	& -1.70E-05
\\\hline
\end{tabular}
}
\begin{flushleft}
{\tiny
{\em Put option parameters:} $r=0.02$, $\de=0$, $T=0.1$, $S=100$.
\sbr
{\em Parameters of the Heston model: $v_0=0.18; \rho=-0.58, \sg_0=2.44, \ka=0.30, m=0.18$.}
\sbr
{\em Time: CPU time in microseconds, the average over 1 mln runs.}
\sbr
Given the error tolerance,
the parameters are chosen the same for strikes in the range $[85, 115]$, using the
universal prescriptions with the correction
factors $k_b=0.8, k_d=0.8$, $k_\ze=1.85, k_\La=1.3$.
\sbr For $\eps=10^{-12}$: $N=94$,	$\ze=0.080430727
$, CPU time for 7 and 120 strikes: 72.0 and 565.7 microsec.,
respectively.
\sbr For $\eps=10^{-6}$: $N=48$,	$\ze=0.13343117
$, CPU time for 7 and 120 strikes: 48.3 and
 364.3 microsec., respectively.
\sbr For $\eps=10^{-2}$: $N=30$,	$\ze=0.181940202$, CPU time for 7 and 120 strikes: 27.8 and
 237.4 msec., respectively.
}
\end{flushleft}

 \end{table}

\begin{table}
\caption{\small Put in the Heston model, $T=1$. Prices and errors of the SINH-acceleration
with the universal choice of the parameters.
}
{\tiny
\begin{tabular}{cccccccc}
\hline\hline
$K$ & 85 &	90	& 95 &	100 &	105	& 110 &	115\\
$x'$ & 0.2381418803 &	0.1809834665 &	0.1269162452 &	0.0756229508 &	0.0268327867 &	-0.019687229	& -0.0641389916
\\
$V_{\mathrm{put}}$ & 4.7941827931 &	5.6161173264	& 6.646714606 &	8.0122168751 &	9.9462613433 &	12.730505446 &	 16.3323981366
\\
\hline\hline
 $\eps=10^{-12}$ & -1.96E-14 &	1.95E-14 &	9.95E-14 &	0 &	-6.04E-14 &	9.95E-14 &	1.00E-12
\\
 $\eps=10^{-6}$ & 8.24E-10 &	1.41E-09 &	-1.36E-08 &	-5.52E-08 &	-5.51E-08 &	2.74E-07 &	1.06E-06
\\
$\eps=10^{-2}$ & -1.31E-04 &	-1.11E-04 &	8.73E-04 &	-2.28E-03 &	3.50E-03&	-2.14E-03&
	-3.56E-03

\\\hline
\end{tabular}
}
\begin{flushleft}
{\tiny
{\em Put option parameters:} $r=0.02$, $\de=0$, $T=1$, $S=100$.
\sbr
{\em Parameters of the Heston model: $v_0=0.18; \rho=-0.58, \sg_0=2.44, \ka=0.30, m=0.18$.}
\sbr
{\em Time: CPU time in microseconds, the average over 1 mln runs.}
\sbr
Given the error tolerance,
the parameters are chosen the same for strikes in the range $[85, 115]$, using the
universal prescriptions with the correction
factors $k_b=0.8, k_d=0.8$, $k_\ze=1.85, k_\La=1.3$.
\sbr For $\eps=10^{-12}$: $N=85$,	$\ze=0.085671285
$, CPU time for 7 and 120 strikes: 66.0 and 518.9 microsec.,
respectively.
\sbr For $\eps=10^{-6}$: $N=49$,	$\ze=0.130631744$, CPU time for 7 and 120 strikes: 42.6 and
 350.6 microsec., respectively.
\sbr For $\eps=10^{-2}$: $N=26$,	$\ze=0.200931104
$, CPU time for 7 and 120 strikes: 28.8 and
234.9 microsec., respectively.
}
\end{flushleft}

 \end{table}

 \begin{table}
\caption{\small Put in the Heston model, $T=5$. Prices and errors of the SINH-acceleration
with the universal choice of the parameters.
}
{\tiny
\begin{tabular}{cccccccc}
\hline\hline
$K$ & 90 &	100	& 110 &	120 &	130 &	140 &	150
\\
$x'$ & 0.3123277288 &	0.2069672131 &	0.1116570333 &	0.0246456563 &	-0.0553970514 &	-0.1295050235 &	-0.198497895
\\
$V_{\mathrm{put}}$ & 8.9118170191 &	11.3017608315 &	14.4866039624 &	18.9062479333 &	24.8561314222 &	32.0308080039 &	 39.9171298805
\\
\hline\hline
 $\eps=10^{-12}$ & -6.04E-14 &	2.01E-13 &	-3.00E-13 &
 	-7.00E-13 &	-1.50E-12 &	-4.00E-12 &	2.20E-12\\
 $\eps=10^{-6}$ & -2.70E-07 &	1.83E-09 &	4.67E-07 &	8.44E-07 &	2.73E-07 &	-2.09E-06 &	-2.78411E-06
\\
$\eps=10^{-2}$ & 5.00E-03&	-8.27E-04 &	-9.73E-03 & 1.64E-02 &	-2.89E-03&	-1.82E-02&	7.70E-03

\\\hline
\end{tabular}
}
\begin{flushleft}
{\tiny
{\em Put option parameters:} $r=0.02$, $\de=0$, $T=5$, $S=100$.
\sbr
{\em Parameters of the Heston model: $v_0=0.18; \rho=-0.58, \sg_0=2.44, \ka=0.30, m=0.18$.}
\sbr
{\em Time: CPU time in microseconds, the average over 1 mln runs.}
\sbr
Given the error tolerance,
the parameters are chosen the same for strikes in the range $[90, 120]$, using the
universal prescriptions with the correction
factors $k_b=0.8, k_d=0.8$, $k_\ze=1.85, k_\La=1.3$.
\sbr For $\eps=10^{-12}$: $N=75$,	$\ze=0.087187403
$, CPU time for 7 and 120 strikes: 60.2 and 460.0 microsec.,
respectively.
\sbr For $\eps=10^{-6}$: $N=43$,	$\ze=0.13264446
$, CPU time for 7 and 120 strikes: 39.6 and
 338.6 microsec., respectively.
\sbr For $\eps=10^{-2}$: $N=22$,	$\ze=0.203311839
$, CPU time for 7 and 120 strikes: 27.5 and
230.0 microsec., respectively.
}
\end{flushleft}

 \end{table}

 \begin{table}
\caption{\small Put in the Heston model, $T=15$. Prices and errors of the SINH-acceleration
with the universal choice of the parameters.
}
{\tiny
\begin{tabular}{cccccccc}
\hline\hline
$K$ & 90 &	100	& 110 &	120 &	130 &	140 &	150
\\
$x'$ & 0.6406883845 &	0.5353278689 &	0.440017689	& 0.3530063121 &	0.2729636044	& 0.1988556322 &	0.1298627607\\
$V_{\mathrm{put}}$ & 12.4856557684	& 14.8462073848	& 17.4752559196	& 20.4094193312	& 23.6896491628 &	27.3577089222 &	 31.4493345118
\\
\hline\hline
 $\eps=10^{-12}$ & -3.00E-13 &	-1.40E-12 &	-6.50E-12 &	-4.90E-12 &	3.46E-11 &	-2.20E-11	& -7.48E-11
\\
 $\eps=10^{-6}$ & -1.04E-06	&4.53E-06 &	-9.60E-06 &	7.17E-06 &	1.37E-05 &	-3.16E-05 &	-6.82E-06
\\
$\eps=10^{-2}$ & -0.0164 &	-0.053 &	-0.070 &	-0.051 &	0.00314&	0.0753 &	0.139

\\\hline
\end{tabular}
}
\begin{flushleft}
{\tiny
{\em Put option parameters:} $r=0.02$, $\de=0$, $T=15$, $S=100$.
\sbr
{\em Parameters of the Heston model: $v_0=0.18; \rho=-0.58, \sg_0=2.44, \ka=0.30, m=0.18$.}
\sbr
{\em Time: CPU time in microseconds, the average over 1 mln runs.}
\sbr
Given the error tolerance,
the parameters are chosen the same for strikes in the range $[90, 120]$, using the
universal prescriptions with the correction
factors $k_b=0.8, k_d=0.8$, $k_\ze=1.85, k_\La=1.3$.
\sbr For $\eps=10^{-12}$: $N=58$,	$\ze=0.095602143
$, CPU time for 7 and 120 strikes: 56.3 and 451.0 microsec.,
respectively.
\sbr For $\eps=10^{-6}$: $N=32$,	$\ze=0.145718873
$, CPU time for 7 and 120 strikes: 38.2 and
 298.5 microsec., respectively.
\sbr For $\eps=10^{-2}$: $N=15$,	$\ze=0.224004098
$, CPU time for 7 and 120 strikes: 25.0 and
209.3 microsec., respectively.
}
\end{flushleft}

 \end{table}

 \begin{table}
\caption{\small   Put in the Heston model. Panel A: short and moderate maturities; panel B: long maturities.
 Errors (rounded) of the Lewis-Lipton choice of the line of integration $\om=-0.5$
(LLT: simplified trapezoid rule, LLS: Simpson rule) and of Carr-Madan-Schoutens choice $\om=-1.75$
(CMST: simplified trapezoid rule, CMSS: Simpson rule). In all cases, $\ze=0.125$, $N=4096$, hence, the truncation errors
are negligible, and the errors shown are, essentially, the
discretization errors.}
{\tiny
\begin{tabular}{cccccccc}
\hline\hline
$A$ \\\hline
$K$ & 85 &	90	& 95 &	100 &	105	& 110 &	115\\
\hline\hline
$T=0.004$ \\
$x'$ & 0.205437159 &	0.1482787452 &	0.0942115239 &	0.0429182295 &	-0.0058719347 &	-0.0523919503 &	 -0.0968437129\\\hline
LLT & -2.2504E-09	& -2.310E-09 &	-2.372E-09 &	-2.433E-09 &	-2.4947E-09 &	-2.555E-09 &	-2.615E-09\\
LLS & 2.14E-04 &	-1.90E-04 &	-0.0465 &	-1.060 &	-5.012 &	-9.991 &	-14.99\\\hline
CMST & 2.84E-14	& 5.68E-14	& 4.26E-14 &	-6.02E-14 &	-4.00E-14 &	-6.93E-14 &	0\\
CMSS & -2.17E-07 &	-2.17E-07 &	-2.17E-07 &	-2.17E-07 &	-2.17E-07	& -2.17E-07 &	-2.17E-07\\\hline\hline
$T=0.1$ \\
$x'$ & 0.2085894213 &	0.1514310075 &	0.0973637862 &	0.0460704918 &	-0.0027196724 &	-0.049239688 &	-0.0936914506
\\\hline
LLT & -2.248E-09 &	-2.308E-09	& -2.369E-09 &	-2.430E-09 &	-2.450E-09 &	-2.551E-09 &	-2.613E-09\\
LLS & 2.15E-04 &	2.21E-04 &	2.26E-04 &	2.32E-04 &	2.38E-04 &	2.44E-04	& 2.50E-04\\\hline
CMST & 1.00E-13 &	9.99E-15 &	-7.99E-14 &	-3.02E-14 &	0 &	9.95E-14 &	-9.95E-14\\
CMSS & -2.17E-07 &	-2.17E-07 &	-2.17E-07 &	-2.17E-07 &	-2.17E-07 &	-2.17E-07 &	-2.17E-07\\\hline\hline
$T=1$\\
$x'$ & 0.2381418803 &	0.1809834665 &	0.1269162452 &	0.0756229508 &	0.0268327867 &	-0.019687229 &	-0.0641389916
\\\hline
LLT & -2.229E-09	& -2.290E-09 &	-2.348E-09 &	-2.408E-09 &	-2.468E-09 &	-2.527E-09 &	-2.587E-09
\\
LLS & 2.13E-04	& 2.19E-04 &	2.25E-04 &	2.30E-04 &	2.36 &	2.42 &	2.47\\
\hline
CMST & 3.02E-14 &	1.71E-13 &	-1.09E-13 &	1.71E-13 &	0 &	9.95E-14 &	-9.95E-14\\
CMSS & -2.17E-07 &	-2.17E-07 &	-2.17E-07 &	-2.17E-07 &	-2.17E-07 &	-2.17E-07 &	-2.17E-07\\
\hline\hline
$B$ \\\hline
$K$ & 90	& 100 &	110 &	120 &	130 &	140 &	150
\\

\hline\hline
$T=5$ \\
 $x'$ & 0.3123277288 &	0.2069672131 &	0.1116570333 &	0.0246456563 &	-0.0553970514	& -0.1295050235	& -0.198497895
\\\hline
LLT & -2.206E-09 &	-2.316E-09 &	-2.426E-09 &	-2.536E-09 &	-2.647E-09 &	-2.757E-09 &	-2.867E-09\\
LLS & 2.107E-04 &	2.212E-04 &	2.317E-04 &	2.422E-04 &	2.527 &	2.631E-04 &	2.736

\\\hline
CMST & 1.95E-04 &	1.77E-04 &	1.63E-04 &	1.50E-04 &	1.40E-04 &	1.31E-04 &	1.23E-04\\
CMSS & -0.0027 &	-0.0024 &	-0.0022 &	-0.0021 &	-0.0019 &	-0.0018	& -0.0017\\
\hline\hline
$T=15$ \\
 $x'$ & 0.6406883845 &	0.5353278689 &	0.440017689 &	0.3530063121 &	0.2729636044 &	0.1988556322	& 0.1298627607
\\\hline
LLT & -2.021E-09 &	-2.114E-09 &	-2.204E-09	&-2.294E-09 &	-2.384E-09 &	-2.473E-09	& -2.563E-009\\
LLS & 1.929E-04 &	2.014E-04 &	2.099E-04 &	2.185E-04 &	2.270E-04 &	2.355 &	2.440\\\hline
CMST & 0.126 &	0.116 &	0.108 &	0.101 &	0.0947 &	0.0895 &	0.0848\\
CMSS & 0.0471 &	0.0436 &	0.0407 &	0.0382 &	0.0360 &	0.0341 &	0.0324

\\\hline\hline

\end{tabular}
}
 \end{table}

 \subsection{Call option on the bond in CIR model (Table 10) and in CIR-subordinated NTS model (Table 11)}\label{CIRTable}

\begin{table}
\caption{\small Prices of the call option on bond in CIR model, rounded.
Errors, number of terms and CPU times (in msc) of different realizations of iFT.
}
{\tiny
\begin{tabular}{c|cccccccc}
\hline
$K$ & 97.50512024 &	97.6461914 &	97.78746667 &	97.92894634 &	98.0706307 &	98.21252005 &	98.35461469 &	 98.49691491
\\
$z_{TK}$ & -0.02 &	-0.0175	& -0.015 &	-0.0125 &	-0.01 &	-0.0075 &	-0.005 &	-0.0025
\\\hline
Call 1 & \\
Price & 0.876713465 &	0.756024612	& 0.636971345 &	0.519888515 &	0.40523729 &	0.293696753	&
0.186378527	& 0.08550053\\
$\ze$ & 0.110853 &	0.110844 &	0.110835 &	0.110826 &	0.110817& 0.110808 &	0.110799 &	0.110791\\
$N$ & 43 &	44 &	45 &	46 &	48 &	50	& 53 &	58\\
Time & 20.2 &	20.5 &	20.6 &	20.9 &	21.4 &	21.9 &	22.7 & 24.1
\\\hline
Call 2 &\\
Err & 1.54E-14 &	-5.11E-13 &	-5.46E-13 &	-5.93E-13 &	-6.53E-13	& -6.83E-13	& -7.36E-13	&-7.61E-13\\
$N$ & 58 &	47 &	47 &	48 &	50 &	51	& 53	& 57\\
Time & 24.9 &	25.7 &	25.8 &	26.3 &	26.7 &	26.9 &	27.7 &	28.9\\\hline
FrPara & $\al=2.8$ \\
Err & 1.54E-14 &	-5.11E-13 &	-5.46E-13 &	-5.93E-13 &	-6.64E-13 &	-6.83E-13	& -7.36E-13	& -7.72E-13\\
$N$ & 759 &	795 &	838	& 891 &	961 &	1060 &	1217	&1540\\
Time & 215.4 &	216.0 &	215.0 &	216.2 &	216.9 &	215.4 &	217.5 &	217.8\\\hline
Flat iFT& $N=10^5$ &\\
Err & -5.51E-06 &	-7.33E-06 &	-8.99E-06 &	-1.03E-05 &	-1.11E-05	& -1.10E-05 &	-9.08E-06 &	5.29E-07\\
Time & 4580 &	4536 &	4512 &	4597 &	4764 &	4688 &	4937 &	4581\\\hline
\end{tabular}
}
\begin{flushleft}
{\tiny
{\em Parameters of CIR model:} $\ka=1.6; \theta=0.01, \sg=0.5$.
\sbr
{\em Bond} matures at $T+\tau=3$, spot price 99.384925, implied $r_0=0.01$.
\sbr
{\em Call option} matures at $\tau=1$; strikes $K$ and $z=\ln(C(2,0)-\ln K)/B(2,0)$ are shown
in the table.
\sbr
{\em Call 1:} call option prices calculated using the sinh-acceleration with the parameters chosen
for a curve in the lower half-plane and put-call parity; errors
of these prices is less than $2*E-14$. The mesh and truncation parameters used are
0.9 times larger and 0.95 smaller than the general prescription for the error tolerance $E-13$ recommends.
\sbr
{\em Call 2:} errors of option prices calculated using the sinh-acceleration with the parameters
chosen for the curve that is above $-iB(T-\tau,0)$ w.r.t. to prices Call 1; no put-call parity is needed.
The mesh and truncation parameter are
0.9 times larger and 0.95 smaller than the general prescription for the error tolerance recommends.
\sbr
{\em FrPara:} errors of option prices calculated using the fractional-parabolic method with the parameters
chosen so that the errors are of the same order of magnitude as the ones for Call 1.
\sbr {\em Flat iFT:} errors of option prices calculated using the flat iFT with $N=100,000$ terms.
\sbr
{\em Time:} CPU time in microsec., the average over 1 mln runs.
}
\end{flushleft}
\end{table}

\begin{table}
\caption{\small Prices of the call option in the CIR-subordinated NTS model, rounded.
Errors, meshes,  and CPU times of different realizations of iFT.
}
{\tiny
\begin{tabular}{c|cccccccc}
\hline
$K$ & 97.50512024 &	115.0273799 &	112.7496852	& 110.5170918 &	108.3287068 &	106.1836547 &	104.0810774	& 102.020134
\\
$\ln(S_0/K)$ & -0.14 &	-0.12 &	-0.1 &	-0.08 &	-0.06 &	-0.04 &	-0.02 &	0
\\\hline
$V_{call}$ & 0.000300147 &	0.000359207 &	0.000439001 &	0.000552719 &	0.000728595 &	0.001042077	& 0.001808313	& 0.047791256
\\\hline
SINH1 & $\gam=-\pi/3$ & $\gap=0$ & $\eps=10^{-15}$& \\
$\ze$ & 0.051563 &	0.051557 &	0.051551 &	0.051544 &	0.051538 &	0.051531 &	0.051525 &	0.051518\\
$N$ & 139 &	142 &	145 &	148	& 152 &	157 &	163 &	168 \\
$Err$ & 3.0E-15 &	1.5E-16	& 6.0E-16	& 1.1E-15	& -3.3E-16 &	4.7E-16 &	2.6E-15 &	1.5E-07\\
Time & 118.8 &	120.1 &	121.7 &	123.3 &	125.8 &	129.0 &	132.2 &	147.2
\\\hline
SINH2 & $\gam=-\pi/2$ & $\gap=0$ & $\eps=10^{-15}$&\\
$\ze$ & 0.07762 &	0.07760 &	0.07759 &	0.07757 &	0.07756 &	0.07754 &	0.07752 &	0.07751\\
$N$ & 90 &	93 &	95 &	98 &	101 &	105	& 109 &	115\\
$Err$ & -1.8E-15 &	-1.3E-15 &	3.0E-17 &	-2.1E-15 &	-4.0E-15	&-1.1E-15	 & 3.2E-15 &	6.9E-06\\
Time & 84.4 & 	84.8 &	86.6 & 	88.5 &	90.7 &	92.6 &	95.5 &	158.7
\\\hline
SINH3 & $\gam=-\pi/2$ & $\gap=0$ & $\eps=10^{-7}$& \\
$\ze$ &0.15009 & 	0.15003 &	0.14998 &	0.14992 &	0.14986 &	0.14981 &	0.14975 &	0.14969\\
$N$ & 41 &	42 &	43 &	&	46 &	49 &	52 &	56\\
$Err$ & 5.2E-13 &	5.2E-13 &	5.1E-13 &	5.1E-13 &	5.0E-13	& 4.9E-13	 & 4.9E-13 &	1.8E-04\\
Time & 48.7 &	49.1 &	49.6 &	51.0 &	51.6 &	53.9 &	55.4 &	128.4
\\\hline
Flat iFT & $\om_0=-1.75$ & $ \ze=0.25$ & $N=16384$ & \\
$Err$ & 5.02E-07 &	4.82E-07 &	4.63E-07 &	4.47E-07 &	4.36E-07 &	4.39E-07 &	5.11E-07 &	1.07E-05\\
Time & 2920 &	2893 &	2877 &	2877&	2905&	2880 &	28753 &	2773
\\\hline
\end{tabular}
}
\begin{flushleft}
{\tiny
{\em Parameters of CIR subordinator:} $\ka=1.6; \theta=0.01, \la=0.25$, $y_0=0.02$.
\sbr
{\em Parameters of NTS model:} $m_2=0.1$,	$\nu=1.6$, $\de=0.097$,	$\al=3$, $\be=0$, $\mu=0$.
\sbr
{\em Call option:} maturity  $\tau=0.004$, $r=0.02$; spot $S_0=100$, strikes $K$ and $\ln S_0/K$ are shown
in the table.
\sbr {\em Benchmark prices} calculated using several sets of the parameters of the sinh-acceleration;
errors less than $5E-15$ in the absolute value.
\sbr
{\em SINH1:} $\ze$ (rounded), $N$, errors and CPU time when using the general prescription
with $\gam=-\pi/3, \gap=0$, for $\eps=10^{-15}$.
\sbr
{\em SINH2:} $\ze$ (rounded), $N$, errors and CPU time when using the general prescription
with $\gam=-\pi/2, \gap=0$, for $\eps=10^{-15}$.
\sbr
{\em SINH3:} $\ze$ (rounded), $N$, errors and CPU time when using the general prescription
with $\gam=-\pi/2, \gap=0$, for $\eps=10^{-7}$.
\sbr
In all cases SINH$j$, $j=1,2,3$, $\ze$ is 2/3 of recommended, and $\La=N\ze$ is 1.2 times larger.
\sbr
{\em Flat iFT:} errors and CPU time of the calculation using the flat iFT with the standard prescription
$\om_0=-1.75, \ze=0.25, N=16384$.
\sbr
{\em Time:} CPU time in msc, the average over 100,000 runs.
}
\end{flushleft}
\end{table}
\subsection{Examples of calculation of quantiles (Table 12)}\label{examplesfractiles}
\begin{table}[]
 \caption{\small Errors of approximations L,N,LL, LN and QT,
 for different fractiles $F^{-1}(A)$ and intervals of different length, containing
 $F^{-1}(A)$.}
 {\small
\begin{tabular}{c|cc|cc|cc|cc|cc}
\hline
$A$ & $2*10^{-9}$ & & $e^{-16}$ & & $10^{-5}$ & & 0.001 & & 0.3 & \\\hline
$h$ & 0.01 & 0.001 & 0.01 & 0.001 & 0.01 & 0.001 & $10^{-3}$ & $10^{-4}$ & $10^{-4}$ & $10^{-5}$\\\hline
$L$ & -7.3E-05 &	-6.8E-07 &	-7.7E-05 &	-7.9E-07 &	-9.3E-05 &	-9.4E-07 &	-2.4E-06	& -2.5E-08	& -6.4E-06	& -5.0E-08\\
$N$ & 7.7E-05 &	1.2E-06 &	0.0001 &	8.8E-07	& 0.00015 &	1.4E-06 &	4.3E-06 &	2.0E-08	&
1.7E-06 &	9.5E-09\\
$LL$ & -8.4E-07 &	-7.8E-09 &	-1.7E-06 &	-1.7E-08 &	-5.7E-06 &	-5.8E-08 &	-6.1E-07 &	-6.5E-09 &	-3.6E-06	& -2.9E-08\\
$LN$ & 8.9E-07	& 1.4E-08 &	2.4E-06	& 2.0E-08 &	9.0E-06	& 8.6E-08 &	1.1E-06	& 5.0E-09 &	1.0E-06 &	5.4E-09\\
$QT$ & -1.5E-09 &	-3.4E-12 &	-6.9E-09 &	-7.1E-12 &	-5.7E-08 &	-5.3E-11 &	-4.1E-09 &	-1.3E-12 &	-5.5E-08 &	 -2.1E-11\\\hline
\end{tabular}}
\end{table}

\end{document}